\newtheorem{theorem}{Theorem}
\newtheorem{lemma}{Lemma}
\theoremstyle{definition}
\newtheorem{example}{Example}[section]
\newtheorem{remark}{Remark}[section]
\DeclareRobustCommand{\stir}{\genfrac\{\}{0pt}{}}
\def\a{\alpha}
\def\b{\beta}
\def\la{\lambda}
\def\th{\theta}
\def\nn{\notag}
\def\ve{\varepsilon}
\def\btp{\tilde{\bm p}}
\def\btq{\tilde{\bm q}}
\def\be{\bm \varepsilon}
\def\TS{\mathcal{T}}
\def\D{\mathcal {D}_\a}
\def\H{\mathcal {H}_\a}
\def\C{\mathcal{CV}}
\def\p{\hat{p}}
\def\q{\hat{q}}
\def\bp{{\bm p}}
\def\bq{{\bm q}}
\def\bhp{\hat{\bm p}}
\def\bhq{\hat{\bq}}
\def\hbp{\bhp}
\def\hbq{\bhq}
\def\bu{{\bm u}}
\def\u{\hat{u}}
\def\bhu{\hat{\bm u}}
\def\S{{\cal S}_\a}
\def\W{\tilde{W}}
\def\G{{\cal G}}
\def\ch{{\cal X}^2}
\def\R{\tilde{R}}
\def\chu{{\cal X}_{\bu}^2}
\def\ch{{\cal X}^2}
\def\chp{\ch_\bp}
\newcommand\tabT{\rule{0pt}{2.6ex}}
\newcommand\tabB{\rule[-1.2ex]{0pt}{0pt}}
\newenvironment{cenum}{\ \newline \begin{list} 
{\it {(\roman{enumi})}} 
{\topsep=0in\itemsep=.05in\parsep=0in\leftmargin=10pt\itemindent=5pt\usecounter{enumi}}}{\end{list}}
\numberwithin{equation}{section}
\begin{document}

\title{Limit Theorems  for Empirical R\'enyi Entropy and  Divergence  with Applications to Molecular Diversity Analysis\thanks{Research partially supported by US NIH grants  R01CA-152158 and U01-GM092655  (GAR, MS, MP)  as well as  US NSF grant DMS-1106485 (GAR).}
}


\author[1]{Maciej Pietrzak (pietrzak.20@osu.edu)}
\author[1,2]{Grzegorz A. Rempala (rempala.3@osu.edu)}
\author[3]{Micha{\l} Seweryn (msewery@math.uni.lodz.pl)}
\author[4]{Jacek Weso{\l}owski (wesolo@mini.pw.edu.pl)}
\affil[1]{Division of Biostatistics, College of Public Health,  The Ohio State University, Columbus OH, USA }   
\affil[2]{Mathematical Biosciences Institute, The Ohio State University, Columbus OH, USA}
\affil[3]{Department of Mathematics and Computer Science, University of {\L}\'odz, Poland}
\affil[4]{Wydzia{\l} Matematyki i Nauk Informacyjnych, Politechnika Warszawska, Warsaw, Poland}

\date{\today}
\maketitle

\begin{abstract} 
Quantitative methods for studying biodiversity have been traditionally rooted in the classical theory of finite frequency tables analysis. However, with the help of modern experimental tools, like high throughput sequencing, we now begin to unlock the outstanding diversity of genomic data in plants and animals reflective of the long evolutionary history of our planet. This molecular data often defies the classical frequency/contingency tables assumptions and seems to require sparse tables with very large number of categories and highly unbalanced cell counts, e.g., following heavy tailed distributions (for instance, power laws). Motivated by the molecular diversity studies, we propose here a frequency-based framework for biodiversity analysis in the asymptotic regime where the number of categories grows with sample size (an infinite contingency table). Our approach is rooted in information theory and based on the Gaussian limit results for the effective number of species (the Hill numbers) and the empirical Renyi entropy and divergence. We argue that when applied to molecular biodiversity analysis our methods can properly account for the complicated data frequency patterns on one hand and the practical sample size limitations on the other. We illustrate this principle with two specific RNA sequencing examples: a comparative study of T-cell receptor populations and a validation of some preselected molecular hepatocellular carcinoma (HCC) markers.
\par {\bf keywords} Hill number,  Central limit theorem,  Next generation sequencing, Triangular arrays, T-cell receptors 
\par {\bf AMS classification} {60F05  60G42  94A17}
\end{abstract}

\section{Introduction}

Developing effective methods for quantifying and comparing empirical diversity of various biological populations is one of the fundamental problems  of modern life sciences as it has  direct impact on our understanding of the basic  operating principles of our planet's  ecosystem and its evolution \citep[cf., eg.,][]{Berkov:2014fk}.
In the  course  of its 3.5 billion years of evolutionary history, nature has developed an outstanding bio- and molecular diversity among  the Earth's  species of plants and animals.   Indeed,  it is  estimated that there are currently about 8.7 million eukaryotic species on earth,  both  marine and 
terrestrial,    88\% of which are still waiting to be described \citep{Mora:2011xy}.  The diversity  at the molecular level is 
perhaps even more spectacular,   as it occurs at different levels of biological organization: 
within one individual (e.g., through RNA, DNA, proteins, and metabolites), between individuals of the same and related species,  within and between species  and ecosystems, as well as  throughout evolution \cite[see, e.g.,][]{campbell2003save}.
For instance,   the  number of different molecular types of human T-cells is estimated at $10^{18}$ \citep{Jane05} which  only slightly less that the currently estimated number of stellar objects in the known universe (the latter believed to be of  the order $10^{21}$).  
 
Whereas the power of modern  computing has  allowed us  to make  steady progress towards  building  ever more  robust empirical measures of biodiversity  based on  a variety of  considerations \cite[see, e.g.,][]{Presley:2014fr},  the most relevant to our discussion here are the  measures  borrowed from  the field of information theory.   They  include among others  the   {\em Hill number}  (or the effective number of species) and the related concept of  the {\em Renyi  entropy}   (see, e.g., the recent review \cite{chiu2014phylogenetic} and references therein). Although  originally proposed  for  quantifying ecological diversity in the macro-scale ecosystems \citep{Chao:2010rt},   the use  of the empirical Renyi entropy as a descriptor of diversity  was also adopted  for  molecular populations in \citet{DeAndrade11}.     Since then   the Renyi-type  measures were   applied to  problems of molecular populations ranging from  analyzing   regulatory variants   and   testing   genome-wide associations \citep{Sun13,Sadee:2014aa} to comparing  different   T-cell populations  \citep{Cebula2013,RS12}.  Despite their growing usage in biodiversity studies of  both macro- and molecular- level populations,  it appears that  some important   statistical properties  of  the Renyi-type  measures have not been yet sufficiently understood, especially in the context of frequency-based   analysis and large sample behavior.

Currently,   standard  methods of obtaining  molecular level data on  the {\em transcriptome} (RNA)  abundance  rely on the so-called  next-generation sequencing (NGS) technology and especially    the high-throughput  RNA sequencing or RNA-seq \citep{Wang:2009aa}. However, the   molecular count  data  from NGS often elude  standard  statistical  analysis  due to the fact that exhaustive sampling of the DNA and RNA fragments for the purpose of sequence reconstruction  is  not feasible and that the  sequencing errors increase  with  sampling intensity or  {\em sequencing depth} \citep{ORawe:2015kq}.    It has been therefore generally conceded  \citep{Oh:2014aa}  that  the  standard,   fixed-dimension,  non-parametric frequency/contingency table  analysis   (see, e.g., \citealt{AA02}) does not readily apply  to the NGS data and that a different, {\em infinite-size}  contingency table framework,   more  reflective of the current sequencing technology,  appears  necessary.  Due to the  nature of the  NGS methods,   such  framework  should be based on the large sample (high-throughput)  considerations  but,  at the same time,  should also  account   for  the  increase in   the number of sequencing errors   with  increasing sample size  as well as   for the  under-sampling bias.   

 Motivated by  the questions on comparing biodiversity in molecular data (especially arriving from  the NGS experiments)  in  the current paper we establish   some  large sample results for the empirical Renyi entropy and divergence  in order to  bridge the gap between  current heuristic approaches   and  a more formal statistical  theory of large samples.   To this end, we  derive herein several central  limit theorems (CLTs)  which  yield approximate confidence bounds  for the  (Renyi) entropy-based  measures of diversity  and similarity in  the setting of  an infinite contingency table. Our CLT results   complement both the  law of large number theorems in \cite{RS12}  as well as the  CLT   for the plugin estimates of the Shannon entropy \cite{zhang2012normal} and the Kullback-Leibler divergence estimates  \citep{paninski2003estimation,zhang2014nonparametric}.  Since  in the  NGS experiments   one typically expects to under-sample the transcriptome, we focus here on   the  Renyi entropy exponent  (which below is denoted by $\alpha$) less than one, so as to   up-weight the  contributions of the lower  counts and  our  CLT results are restricted to this case. The extensions to  arbitrary exponents are straightforward but  not considered here.  In order to provide examples of the types of  applications  motivating   the mathematical results, we  analyze   two real  biological datasets from two different types of  NGS experiments.  In the first experiment, described  in   the  study \cite{Cebula2013},   one compares   multiple  T-cell receptors populations taken from mice before and after treatment with antibiotics.  The goal of  the  second  experiment  is the  elucidation  of   differences in  gene expression profiles between cancer and  control tissues  in individuals with hepatocellular carcinoma,  as described  in \cite{HCC14}. In both presented examples  the NGS datasets are analyzed  and de-noised by applying   a multi-stage process  developed on the basis of our theoretical results. 
 
As already indicated above, the problem of empirically estimating entropy and divergence  has been extensively studied in the statistical and machine learning literature  over past several decades, both  in the context of  discrete  and continuous   distributions.  See, for instance,  the monograph by \cite{pardo2005statistical} or the review  in \cite{Krish14} for more details.  In the  general case of   Renyi's  entropy and closely related Tsallis' entropy  of  a fixed continuos  distribution $f$ in ${\mathbb R}^m$,   a class of consistent estimators   was  proposed in \cite{leonenko2008class}  based on the $k$-th nearest-neighbor distances computed from the appropriate random samples of size $n$ from $f$.  The idea was later also extended to the Renyi entropy functionals in \cite{kallberg2012statistical} and it appears that similar results could be expected  to hold in the discrete case as well.  The main difference between these types of results and what is considered here is that in our setting  the discrete density function $f$ is allowed to change as the  sample size $n$ increases. Additionally, although   in the current and that we only analyze the basic  empirical frequency (the so-called plug-in) estimates.

The paper is organized as follows. In the next section  (Section~2) we outline    the relevant mathematical   concepts  along with the necessary notation. In  Section~3 we state the main theoretical results of the paper, namely the CLTs for the Hill number (or the Tsallis entropy) and  the Renyi entropy and divergence  in the  asymptotic regime when the diversity of the population (i.e., the number of different types) grows with the sample size.  The results for the simpler case (Theorems 1 and 2) when  Renyi entropy statistics admit linear approximations are  established via the intermediate CLT results for the corresponding power sums which are closely related to the CLTs for  Hill's numbers and Tsallis' entropies. These results are also  included  as parts of formulations of Theorems 1 and 2. In case of  the uniform distribution for  the  Renyi entropy  as well as   the equal-marginals bivariate distribution for  the Renyi divergence, the power sum CLTs are no longer valid (there is no linear approximation available) and other methods are required to establish weak convergence to Gaussian variates  under slightly more stringent conditions. These results  are  presented  as Theorems 3 and 4 in Section~3. As it turns out, the key ingredient needed to establish Theorems 3 and 4 is the CLT result for two  Pearson-type  chi-square statistics in an  infinite contingency table. This latter result is of interest in itself and is presented as  Lemma~2 in Section~3.   In the following Section~4, we provide some simulation-based  examples  of the asymptotic behavior  of estimates from Section~3 in the case (relevant for our applications)  of  power law distributions under   various  sampling scenarios.  These examples illustrate  in particular how the  CLTs of Section~3  may  hold or not, depending on the relations between the dimensions of the relevant contingency tables and the empirical sample sizes.  In the second part  of Section~4  we  also discuss in detail the two biological examples of NGS data analysis and  show how the results of Section~3 may be used to analyze biodiversity of T-cell receptors  and to  profile the multiple sets of transcriptomes.  The final Section~5 offers a summary and brief conclusions. The proofs of all more complicated results are provided in the appendix along with some auxiliary technical lemmas.

\section{Power Sums,  Entropy and Divergence}\label{sec:2}
Consider a triangular array of bivariate row-wise independent  random variables  $Z_{n,k}$ for $k=1,\ldots,n$ which in each row are equidistributed with the random variable  $Z_n=(X_n,Y_n)$  such that $P(X_n=i,Y_n=j)=p_{ij}^{(n)}$ for $i,j=1,\ldots,m_n$. Below we suppress the index $n$  when possible,  writing e.g., $m, Z_k, Z, p_{ij}$, etc.  for simplicity.  

Let  $\a>0$ and for any  probability distribution  $\bp=(p_i)_{i=1}^m$ 
 define
 \begin{equation}\label{eq:1} 
 \S(\bp) =\sum_{i=1}^m {p}_i^\alpha.
 \end{equation}
 Similarly, for any pair  of distributions $\bp=(p_i)_{i=1}^m$ and $\bq=(q_i)_{i=1}^m$
 define 
  \begin{equation}\label{eq:2} \S(\bp,\bq) =\sum_{i=1}^m {p}_i^\alpha q_i^{1-\a}.
  \end{equation} 
   (Note that  ${\cal S}_1\equiv 1$). 
   The well-known special case of the above is  $\a=1/2$,  which results in  a symmetric index  ${\cal S}_{1/2}(\bp,\bq)={\cal S}_{1/2}(\bq,\bp)$ often referred to as  the  Bhattacharyya coefficient  \citep[see, e.g.,][]{nielsen2011burbea}.
   
    Recall   \citep{Renyi63} that for a given distribution $\bp$ its  Renyi entropy $\H$  is defined as
   $$\H(\bp) =\frac{1}{1-\a} \log\left( \sum p_i^\a\right)=\frac{1}{1-\a} \log \S(\bp) $$ and that for  a pair 
   of  distributions $(\bp,\bq)$  their   Renyi divergence $\D$ is  defined  as 
   $$\D(\bp,\bq) = \frac{1}{\a-1}\log \S(\bp,\bq).$$  Note that the sign change in the normalizing constant is needed in order to ensure non-negativity of $\H$ and $\D$. The special case of $\D$ with $\a=1/2$ is referred to as the Bhattacharyya distance,  and may be expressed in terms of the   Mahalanobis distance \citep[see, e.g.,][]{nielsen2011burbea}, whereas      the linear approximation of $\H(\bp)$  given by 
\begin{equation}\label{eq:tse}\TS(\bp)=\frac{1}{1-\a}(\S(\bp)-1).\end{equation} 
 is sometimes referred to as the  Tsallis entropy  and has important applications in the field of statistical mechanics  \citep{tsallis1988possible}.   
 Note that for our current purposes,  we  will only consider the  quantities $\D,\H$, and $\TS$  for  $\a$ satisfying $0<\a<1$. 
 
 In what follows the summation symbol  without subscripts ($\sum $) will indicate  summation with respect to the index $i$   ($i=1,\ldots,m$)  whereas $\bp=(p_i)_{i=1}^m$ and $\bq=(q_i)_{i=1}^m$ will  (typically) denote the marginal distributions  of  the bivariate variable $Z=(X,Y)$ whose distribution is denoted by $(p_{ij})_{i,j=1}^m$.  Additionally, the  uniform distribution on $m$ points will be denoted by $\bu$.  An important   relation  between  the Renyi entropy and the Renyi divergence   is 
 \begin{equation}\label{eq:hud}
 \H(\bp)=\log m -\D(\bp,\bu).
 \end{equation} 
 We  note   also the following monotonicity property of $\D$ and $\H$ with respect to the index $\a$. 
 \begin{lemma} For $0<\a< \b<1$ we have 
 $\D(\bp,\bq)\le {\cal D}_\beta(\bp,\bq) $ and thus, in view of \eqref{eq:hud}, also $ \H(\bp)\ge {\cal H}_{\beta}(\bp)$.
 \end{lemma}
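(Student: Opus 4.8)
The plan is to derive the monotonicity of $\D$ in the index $\a$ by identifying $\D(\bp,\bq)$ as the slope of a chord of a convex function anchored at $\a=1$, and then to read off the entropy inequality from \eqref{eq:hud}.

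First I would rewrite the power sum \eqref{eq:2} as an expectation. Writing $r_i=p_i/q_i$ (with the usual convention that a term with $q_i=0$ is dropped), one has
$$\S(\bp,\bq)=\sum_i q_i\left(\frac{p_i}{q_i}\right)^{\!\a}=E\big[r^\a\big],$$
where $r$ is the variable taking the value $r_i$ with probability $q_i$. Set $f(\a)=\log\S(\bp,\bq)=\log E[r^\a]$. The argument rests on two facts: (i) $f$ is convex in $\a$, and (ii) $f(1)=0$. Fact (ii) is immediate since $E[r]=\sum_i p_i=1$. For (i) I would invoke H\"older's inequality: for $\a=\la\a_1+(1-\la)\a_2$ with $\la\in(0,1)$, writing $r^\a=(r^{\a_1})^\la(r^{\a_2})^{1-\la}$ and applying H\"older with conjugate exponents $1/\la$ and $1/(1-\la)$ gives $E[r^\a]\le (E[r^{\a_1}])^\la(E[r^{\a_2}])^{1-\la}$, and taking logarithms yields $f(\a)\le\la f(\a_1)+(1-\la)f(\a_2)$.

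The key step is to note that, because $f(1)=0$,
$$\D(\bp,\bq)=\frac{f(\a)}{\a-1}=\frac{f(1)-f(\a)}{1-\a},$$
which is precisely the slope of the chord of the graph of $f$ joining $(\a,f(\a))$ to the fixed right endpoint $(1,0)$. By the classical fact that chord slopes of a convex function are monotone, the slope of the chord to the fixed endpoint $1$ is non-decreasing as its left endpoint increases; applied to $\a<\b<1$ this gives
$$\frac{f(1)-f(\a)}{1-\a}\le\frac{f(1)-f(\b)}{1-\b},$$
that is, $\D(\bp,\bq)\le{\cal D}_\b(\bp,\bq)$. For the entropy statement I would then simply specialize to $\bq=\bu$ and use \eqref{eq:hud}: since $\D(\bp,\bu)\le{\cal D}_\b(\bp,\bu)$, the identity $\H(\bp)=\log m-\D(\bp,\bu)$ forces $\H(\bp)\ge{\cal H}_\b(\bp)$.

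I expect the only genuinely substantive point to be the convexity (i), and even that reduces to a one-line H\"older estimate; the remainder is bookkeeping once the chord-slope structure is visible. The real engine is the anchoring identity $f(1)=0$, which is what turns the ratio defining $\D$ into a bona fide chord slope and lets the monotonicity follow from convexity alone, sparing us the messier alternative of differentiating $\D$ in $\a$ and controlling the sign of the resulting derivative.
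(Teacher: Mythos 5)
Your proof is correct, but it reaches the result by a different organization than the paper, even though the same convexity mechanism lies underneath. The paper's proof is a one-step Jensen argument: writing ${\cal S}_\a(\bp,\bq)=\sum p_i\bigl((q_i/p_i)^{1-\b}\bigr)^{\frac{\a-1}{\b-1}}$ and using convexity of $x\mapsto x^{\frac{\a-1}{\b-1}}$ (the exponent exceeds $1$ when $0<\a<\b<1$), it obtains $\log {\cal S}_\a(\bp,\bq)\ge \frac{\a-1}{\b-1}\log {\cal S}_\b(\bp,\bq)$ and divides by $\a-1<0$. You instead establish that $f(\a)=\log {\cal S}_\a(\bp,\bq)$ is convex in the order $\a$ (via H\"older) and anchored by $f(1)=0$, then invoke chord-slope monotonicity. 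Unwinding your argument --- write $\b=\la\a+(1-\la)\cdot 1$ with $\la=\frac{1-\b}{1-\a}$ --- your H\"older step is exactly the inequality ${\cal S}_\b(\bp,\bq)\le {\cal S}_\a(\bp,\bq)^{(1-\b)/(1-\a)}$ that the paper derives from Jensen, just with the averaging measure $\bq$ in place of $\bp$; H\"older and Jensen-for-powers are interchangeable tools here. What your framing buys: it isolates the structural reason for monotonicity (log-convexity of the power sum in the order, plus the anchoring at $\a=1$), requires no guessing of the right convex function, and immediately yields monotonicity of the divergence in its order on both sides of $1$ --- which is precisely the extension alluded to parenthetically in Example~1 for the Hill numbers $ENC_\a$. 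The paper's version buys brevity: a two-line computation. One caveat to tighten in your write-up: your convention of dropping terms with $q_i=0$ breaks the anchoring identity whenever $\bp$ charges points where $\bq$ vanishes, since then $f(1)=\log\sum_{\{i:\,q_i>0\}}p_i<0$; the chord comparison still goes through because $f(1)\le 0$ only helps after dividing by the negative quantities $\a-1$ and $\b-1$, but as written your ``key step'' silently assumes common support (an assumption the paper also makes, e.g.\ through the positivity built into the LD condition).
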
 
 \begin{proof} Note that  for $x \ge 0$ the function $x \rightarrow x^\frac{\a-1}{\beta-1}$ is strictly convex for $0<\a<\b<1$. Therefore, by Jensen's inequality  \begin{align*}\D(\bp,\bq) &=\frac{1}{\a-1} \log \sum p_i^\a q_i^{1-\a} =\frac{1}{\a-1} \log \sum p_i\left(\frac{q_i}{p_i}\right)^{(1-\beta)\frac{\a-1}{\beta-1}}\\ &\le  \frac{1}{\beta-1} \log \sum p_i\left(\frac{q_i}{p_i}\right)^{(1-\beta)}={\cal D}_\beta(\bp,\bq).\end{align*} 
 \end{proof}
\begin{example}[Hill's Number] For given $0<\a<1$  the measure of diversity of a   distribution  $\bp$  also known as the  {\em effective number of classes } may be defined as  \citep[see, e.g.,][]{jost2007partitioning,Chao:2012mz,RS12}  $ENC_\a(\bp)=\exp (\H(\bp))=\S(\bp)^{1/(1-\a)}$. It follows then from Lemma~1 that for any $0<\a< \b<1$ we have 
 $ENC_\a(\bp) \ge ENC_\b(\bp)$. (As it turns out, this inequality may be in fact extended to arbitrary positive $\a<\b$). \end{example}

\subsection{Low Diversity  Condition and Projection Variables}\label{sec:ld}
The notion of an infinite-dimension contingency table  brought up in the introduction may be now  formally introduced simply  as a requirement that for $n$-size sample from $(p_{ij})_{i,j=1}^m$ we have $m\to \infty$ as $n\to \infty$. Throughout the paper, let $a\wedge b$ denote $\min(a,b)$ for any real $a,b$ and let $a_n\sim b_n$ (resp. $a_n\sim O(b_n)$) denote $a_n/b_n\to 1$ (resp. $A<\limsup_n a_n/b_n<B$ for some finite  $A,B$) as $n\to \infty$   for any  real sequences $a_n, b_n$. 
Throughout the paper  we  consider  only  the  {\em  low diversity} (LD) schemes in which the  marginals $\bp,\bq$,  of $Z$   satisfy  the following {\em LD condition}. \begin{equation}\label{eq:ld}   (n p_\ast)^{-1}=o(n^{-\tau})  \quad \text{for some}\quad \tau>0,  \end{equation} where $p_\ast=\min_i(p_i)\wedge\min_i(q_i)$.
Note that  since $p_\ast\le 1/m$  then  \eqref{eq:ld}  implies  in particular $m/n= o(n^{-\tau})$. As it turns out,  for many distributions  $\bp$ the two conditions are in fact  equivalent, as seen  in the  following. 

\begin{example}[Power Law Model] Let $\bp=\bq$ and 
assume that $p_i=H^{-1}(\beta,m)/(i^{\beta}l(i))$, $(i=1,\ldots,m)$ where $\beta>0 $, $l(x)$ is a non-decreasing slowly varying function (see, e.g., \citealt{soulier2009some}, chapter 1), and $H^{-1}(\beta,m)=1/\sum_{i=1}^m ({i^{\beta}l(i)})^{-1}$ is the normalizing constant. Note that if  $0<\beta<1$ then   $H^{-1}(\beta,m) \sim (1-\beta) l(m)/m^{1-\beta}$ and \eqref{eq:ld} 
is implied by  $m/n=o(n^{-\tau})$ since 
$$(n \min_i p_i)^{-1} \sim (1-\beta)^{-1}\frac{ m^\beta l(m)}{n m^{\beta-1} l(m)} = (1-\beta)^{-1}\frac{m}{n}.$$
\end{example}

For any $0<\alpha<1$ and a given pair $(m, n)$, let  us define  two random variables which will play an important role in the following section.  Let  $W_n^{(\a)}$ be   defined  as  
\begin{equation}\label{eq:w} P(W_n^{(\a)}=\a p_{i}^{\a-1})=p_{i}\end{equation} for $i=1,\ldots,m$. Similarly,  define also  $V_n^{(\a)}$  as
\begin{equation}\label{eq:v}
P\left(V_n^{(\a)}=\a \left(\frac{q_i}{p_i}\right)^{1-\a} + (1-\a) \left(\frac{p_j}{q_j}\right)^{\a}\right)=p_{ij} \end{equation} for $i,j=1,\ldots,m$. 
In the following,  for the reasons discussed below, we refer to \eqref{eq:w} and \eqref{eq:v} as the {\em projection variables} or simply {\em projections}. 
\begin{remark}\label{rem1}   Note that $$E W_n^{(\a)}=\a\S(\bp) $$ and $Var W_n^{(\a)}=0$ iff $p_i=1/m$ for all $i$, that is,  $\bp=(p_i)=\bu$ is a uniform distribution   on  $m$ support points (this case  is often referred to as  a maximal diversity model or a pure noise model).  Similarly, $$E V_n^{(\a)} = \S(\bp,\bq)$$ and it is also easy to see that 
$Var V_n^{(\a)}=0$ iff $p_{i}=q_i$ for all $i$, that is, $\bp=\bq$.
\end{remark}
As it turns out, both   cases $\bp=\bu$ and $\bp=\bq$  require special consideration in  the asymptotic analysis of $\H$ and $\D$. In view of the remark above they may be referred to as the cases of ``degenerate" (zero variance)  projections.  
\begin{example}[Noise--and--Signal and Pure Noise Models]\label{ns1}
 A distribution concentrated on $m+1$ support points, such that $p_0>0$ and $p_i=(1-p_0)/m$ for $1\le i\le m$, may be considered as a simple model of  signal contamination. Note that in this case we have 
$P(W_n^{(\a)}=\a p_0^{\a-1})=p_0$, $P(W_n^{(\a)}=\a m^{1-\a}(1-p_0)^{\a-1})=1-p_0$ and 
$$ Var W_n^{(\a)} = \a^2\left( m^{1-\a}(1-p_0)^\a\left(\frac{p_0}{1-p_0} \right)^{1/2}-p_0^\a\left(\frac{1-p_0}{p_0} \right)^{1/2}\right)^2.$$ 
For the pure noise model $p_0=0$,  in which case the support reduces to $m$ points, and the above formula is not valid. However,  as already pointed out  before, in this case  we may show directly that $Var W_n^{(\a)}=0$. \end{example}
 \section{Limit Theorems}\label{sec:lims} Let $N(0,1)$ denote the standard Gaussian random variable and $\Rightarrow$ denote the usual weak convergence in the  space of  probability  distributions.   Define also the plug-in $n$-sample estimates of $\bp$ and $\bq$ as, respectively,  
$\hbp=(\hat{p}_i)_{i=1}^m,$ where $ \hat{p}_i=\sum_{k=1}^n I(X_k=i)/n$ and  $\hbq=(\hat{q}_i)_{i=1}^m,$ where $ \hat{q}_i=\sum_{k=1}^n I(Y_k=i)/n$. Here and elsewhere in the paper $I(\cdot)$ denotes the indicator function.
As it turns out,  
two distinct sets of CLTs may be derived depending on whether the variables $W_n^{(\a)}$ and $V_n^{(\a)}$ are degenerate  (that is, their respective variances vanish) or not.  For the non-degenerate case  the appropriate CLTs may be established  by expanding on  the   usual projection and Taylor's expansion arguments \citep[see, e.g.,][chapter 1]{Shao04}. This is the simpler case to consider and we discuss it first. 
\subsection{CLTs for Non-Degenerate Projections}
The first two CLT results  for the empirical (plug-in) Renyi entropy and divergence  and their  corresponding power sums  are provided in Theorems~1 and 2 below. Their respective hypotheses  $(iii)$  may be viewed  as complementing  the analogous results established for the Shannon entropy and  the Kullback-Leibler divergence \citep{paninski2003estimation,zhang2012normal,zhang2014nonparametric}.
 Note also that   
$\S=(ENC_\a)^{1-\a}$ where the Hill number $ENC_\a$ is defined in Example~1.  The proofs are deferred  to the appendix. 

Recall that for any  square integrable random variable $X$, such that $EX\ne 0$,  we define its coefficient of variation as ${\cal CV}(X)=(Var X)^{1/2}|E X|^{-1}$.  \begin{theorem}[Renyi Entropy CLT]\label{thm:1}  Let $W_n^{(\a)}$ be a sequence of random variables defined by \eqref{eq:w} with  $\inf_n \C(W_n^{(\a)})>0$ and   let \begin{equation}\label{eq:pc} \sum p_i^{\a-1}(n Var W_n^{(\a)})^{-1/2}\to 0\quad\text{for}\   m,n\to \infty.\end{equation}
Then, under the LD condition \eqref{eq:ld}, as $m,n\to \infty$
\begin{itemize}
\item[(i)] $\S(\bhp)/\S(\bp) \to 1$ in probability, 
\item[(ii)] 
$\sqrt{n}(\S(\bhp)-\S(\bp))/(Var W_n^{(\a)})^{1/2}\Rightarrow N(0,1),$
\item [(iii)] $\sqrt{n}\,(1/\a-1)(\H(\bhp)-\H(\bp))/\C (W_n^{(\a)})\Rightarrow N(0,1).$
\end{itemize}
\end{theorem}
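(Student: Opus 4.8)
The plan is to linearize the power sum $\S(\bhp)=\sum \hat p_i^{\,\a}$ about the truth and to recognize the leading linear term as a normalized sum of i.i.d.\ copies of the projection variable $W_n^{(\a)}$ of \eqref{eq:w}; then (i) and (ii) reduce to a triangular-array law of large numbers and Lindeberg--Feller CLT, and (iii) follows by the delta method. Writing $\hat p_i=p_i+(\hat p_i-p_i)$ and Taylor expanding $x\mapsto x^\a$ to second order, I would set
\begin{equation}\label{eq:plan-taylor}
\S(\bhp)-\S(\bp)=\a\sum p_i^{\a-1}(\hat p_i-p_i)+R_n,
\qquad
R_n=\tfrac{\a(\a-1)}{2}\sum \xi_i^{\a-2}(\hat p_i-p_i)^2,
\end{equation}
with $\xi_i$ between $p_i$ and $\hat p_i$. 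Since $\hat p_i-p_i=n^{-1}\sum_{k=1}^n(\ind(X_k=i)-p_i)$, the linear term equals $n^{-1}\sum_{k=1}^n(\a p_{X_k}^{\a-1}-\a\S(\bp))$, i.e.\ precisely the centered sample mean $\bar W_n-E W_n^{(\a)}$ of the i.i.d.\ copies $W_{n,k}=\a p_{X_k}^{\a-1}$ of $W_n^{(\a)}$, using $E W_n^{(\a)}=\a\S(\bp)$ from Remark~\ref{rem1}.

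For part (ii) I would invoke the Lindeberg--Feller CLT for the array $\{W_{n,k}-E W_n^{(\a)}\}$, whose row sum has variance $n\,Var\,W_n^{(\a)}$. The Lindeberg condition is immediate from \eqref{eq:pc}: the largest atom of $W_n^{(\a)}$ is $\a p_\ast^{\a-1}$, and because the sum of positive terms dominates its largest term, $p_\ast^{\a-1}\le\sum p_i^{\a-1}$, so $\a p_\ast^{\a-1}/(n\,Var\,W_n^{(\a)})^{1/2}\to0$ and the summands are uniformly negligible. Hence the normalized linear term converges to $N(0,1)$.

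The hard part is showing the remainder $R_n$ in \eqref{eq:plan-taylor} is negligible relative to $(Var\,W_n^{(\a)}/n)^{1/2}$, since $x\mapsto x^{\a-2}$ blows up near $0$ and the Taylor point $\xi_i$ may in principle be small. This is exactly where the LD condition \eqref{eq:ld} is essential: it guarantees that $np_\ast\to\infty$ fast enough that a Bernstein/union bound gives $P(\exists i:\hat p_i<p_i/2)\le m\exp(-c\,np_\ast)\to0$, and on the complementary event $\xi_i\ge p_i/2$, so each summand is bounded by $C\,p_i^{\a-2}(\hat p_i-p_i)^2$. Taking expectations and using $E(\hat p_i-p_i)^2\le p_i/n$ yields $E|R_n|\le C\,n^{-1}\sum p_i^{\a-1}$; dividing by $(Var\,W_n^{(\a)}/n)^{1/2}$ reproduces the left side of \eqref{eq:pc}, which tends to $0$. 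A Slutsky argument then combines the two pieces of \eqref{eq:plan-taylor} to give (ii). Part (i) follows similarly: the fluctuation of $\S(\bhp)-\S(\bp)$ relative to $\S(\bp)\ge1$ is of order $\a\,\C(W_n^{(\a)})/\sqrt n$, and the elementary bound $\C(W_n^{(\a)})^2\le p_\ast^{\a-1}$ (from $E(W_n^{(\a)})^2=\a^2\sum p_i^{2\a-1}\le\a^2 p_\ast^{\a-1}\S(\bp)$) together with $np_\ast\to\infty$ forces $\C(W_n^{(\a)})/\sqrt n\to0$, so the ratio converges to $1$ in probability.

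Finally, part (iii) is the delta method applied to $\H(\bp)=(1-\a)^{-1}\log\S(\bp)$. By (i), $\S(\bhp)/\S(\bp)\to1$, so $\log(\S(\bhp)/\S(\bp))=(\S(\bhp)-\S(\bp))/\S(\bp)+o_P\big((\S(\bhp)-\S(\bp))/\S(\bp)\big)$; since $(1/\a-1)(1-\a)^{-1}=1/\a$, the quantity $\sqrt n\,(1/\a-1)(\H(\bhp)-\H(\bp))$ equals $\sqrt n\,(\a\S(\bp))^{-1}(\S(\bhp)-\S(\bp))$ up to $o_P$ terms. By (ii) this converges in distribution to $(Var\,W_n^{(\a)})^{1/2}(\a\S(\bp))^{-1}N(0,1)=\C(W_n^{(\a)})\,N(0,1)$, using $\a\S(\bp)=E W_n^{(\a)}$. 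Dividing by $\C(W_n^{(\a)})$ gives the claim, with the hypothesis $\inf_n\C(W_n^{(\a)})>0$ ensuring the normalization stays nondegenerate.
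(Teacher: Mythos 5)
Your proposal is correct and, at the level of architecture, it is the same proof as the paper's: the identical projection/Taylor decomposition of $\S(\bhp)-\S(\bp)$, a triangular-array Lindeberg CLT for the linear term (the paper's Lemma~\ref{lem:lc} verifies Lindeberg essentially as you do, by observing that \eqref{eq:pc} makes the largest possible centered value of $W_n^{(\a)}$ of smaller order than $(n\,Var\,W_n^{(\a)})^{1/2}$, so the Lindeberg indicators eventually vanish), and the same log-expansion/Slutsky step for part $(iii)$. The one step where you genuinely diverge is the control of the Taylor remainder. The paper splits on the event $\max_i|\theta_i(\p_i/p_i-1)|\le\delta$ and bounds the complementary probability via Boole's inequality together with a $2d$-th central-moment bound for binomial proportions (its auxiliary Lemma~\ref{lem:a1}, proved through Stirling-number recursions), choosing $d$ with $d\tau>1$ so that the LD condition \eqref{eq:ld} yields $(np_i)^{-d}\le n^{-1}$; you instead use a multiplicative Chernoff/Bernstein inequality plus a union bound to get $P(\exists i:\,\p_i<p_i/2)\le m\,e^{-c\,np_\ast}\to 0$ under \eqref{eq:ld}, and on the complement bound the Taylor point by $\xi_i^{\a-2}\le C\,p_i^{\a-2}$. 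Both routes reduce the remainder to $O(n^{-1}\sum p_i^{\a-1})$, which \eqref{eq:pc} annihilates after multiplication by $(n/Var\,W_n^{(\a)})^{1/2}$, so your version is sound; it is arguably more self-contained (no separate moment lemma is needed), whereas the paper's moment lemma is reused in the proofs of Theorems~\ref{thm:2}--\ref{u2}, which is what that extra machinery buys. Two minor points of comparison: your Lindeberg verification does not actually require the hypothesis $\inf_n\C(W_n^{(\a)})>0$ (you only need it for nondegeneracy of the normalizations), while the paper invokes $\sup_n(\mu_n/\sigma_n)^2<\infty$ inside Lemma~\ref{lem:lc}; and your part $(i)$ obtains $\C(W_n^{(\a)})/\sqrt n\to 0$ from the LD condition via $\C(W_n^{(\a)})^2\le p_\ast^{\a-1}$, whereas the paper deduces the analogous fact from \eqref{eq:pc} alone together with $\S(\bp)\ge 1$ --- both are valid.
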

\begin{remark}Note that the first two assertions of the theorem may be equivalently stated in terms of the convergence of the Tsallis plug-in entropy defined by \eqref{eq:tse}.
\end{remark}
\begin{remark}\label{rem2}
Note that the  condition  \eqref{eq:pc} is typically stronger than \eqref{eq:ld}. Indeed, taking $\alpha>1/2$ and  the power law model from Example~2 with $0<\beta<1$ we obtain
 $\sum p_i^\a\sim (1-\beta)^\a\, m^{1-\a}/(1-\a\beta)$ and $\sum p_i^{2\a-1}\sim (1-\beta)^{2\a-1}\, m^{2-2\a}/(1-2\a\beta+\beta)$. Consequently, for some constant  $C>1$ 
$$ \frac{C\sum p_i^{\a-1} }{\sqrt{n (\sum p_i^{2\a-1} -(\sum p_i^\a)^2) }}\ge \frac{m}{\sqrt{n}} \frac{(\max_i p_i)^{\a-1}}{m^{1-\a}}\ge  \frac{m}{\sqrt{n}}$$   for large $m,n$ and   \eqref{eq:pc} implies  \eqref{eq:ld} with $\tau=1/2$. Similarly, (possibly for different $C>1$) 
$$ \frac{\sum p_i^{\a-1} }{\sqrt{n (\sum p_i^{2\a-1} -(\sum p_i^\a)^2) }}\le  \frac{C m}{\sqrt{n}} \frac{(\min_i p_i)^{\a-1}}{m^{1-\a}}\le C \frac{m}{\sqrt{n}} $$
and therefore  in this case \eqref{eq:pc} is seen to be actually equivalent to \eqref{eq:ld} with $\tau=1/2$.
\end{remark}
\begin{remark}[Plug-in Bias]\label{rem2a} Note that, in view of Jensen's inequality applied to the  strictly concave function $x\rightarrow x^\a$ for $x>0$ and  $0<\alpha<1$, we have 
$ E\S(\bhp)/\S(\bp)\le 1$. This and  the assertion $(i)$ above imply together that under the assumptions of Theorem~1 the {\em relative bias}  of  $\S(\bhp)$ satisfies   $E\S(\bhp)/\S(\bp)-1\to~0$
as $n,m\to \infty$. The standard  inequality  $\log x\le x-1$ valid for $x>0$ implies then that 
the bias of the plug-in entropy estimate satisfies \begin{equation}\label{eq:bias} E \H(\bhp)-\H(\bp)\to 0\qquad \text{as}\quad n,m\to\infty. \end{equation} \end{remark}
Unfortunately,  as may be  seen from the proof of Theorem~1 in  the appendix,  a more careful analysis of the tail events for  the plug-in estimate  than the one currently performed   is needed in order to actually establish  a convergence rate in \eqref{eq:bias}.

Turning now to our second result, note that the relation \eqref{eq:hud} suggests that  CLT of Theorem~1 could be also extended to the Renyi divergence.  The proof is again based on the Taylor expansion method where now the projection variable \eqref{eq:w} is  replaced by \eqref{eq:v}. 
\begin{theorem}[Renyi Divergence CLT]\label{thm:2} Let $V_n^{(\a)}$ be a sequence of random variables defined by \eqref{eq:v} with $\inf_n\C (V_n^{(\a)})>0$ and let 
\begin{equation}\label{eq:pc2}\left( \sum \left({q_i}/{p_i}\right)^{1-\a} +\sum \left({p_i}/{q_i}\right)^{\a}\right) (nVar V_n^{(\a)})^{-1/2}\to 0\quad\text{for}\ m,n\to \infty.\end{equation}
Then, under the LD condition \eqref{eq:ld}, as $m,n\to \infty$
\begin{itemize}
\item[(i)] $\displaystyle{\S(\bhp,\bhq)/\S(\bp,\bq) \to 1}$ in probability, 
\item[(ii)] $\displaystyle{\sqrt{n}(\S(\bhp,\bhq)-\S(\bp,\bq))/(Var V_n^{(\a)})^{1/2}\Rightarrow N(0,1)},$
\item [(iii)] $\displaystyle{\sqrt{n}\,(\a-1)(\D(\bhp,\bhq)-\D(\bp,\bq))/\C (V_n^{(\a)})\Rightarrow N(0,1).}$
\end{itemize}
\end{theorem}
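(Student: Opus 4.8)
The plan is to reproduce the projection-plus-Taylor-expansion scheme behind Theorem~\ref{thm:1}, now anchored on the bivariate projection \eqref{eq:v}. Setting $f(x,y)=x^\a y^{1-\a}$, the first step is to expand each summand $f(\hat p_i,\hat q_i)$ of $\S(\bhp,\bhq)$ about $(p_i,q_i)$. Since the gradient of $f$ at $(p_i,q_i)$ has entries $\a(q_i/p_i)^{1-\a}$ and $(1-\a)(p_i/q_i)^\a$, the leading term is
\[
\sum\Big[\a(q_i/p_i)^{1-\a}(\hat p_i-p_i)+(1-\a)(p_i/q_i)^\a(\hat q_i-q_i)\Big].
\]
Writing $\hat p_i-p_i=n^{-1}\sum_k(I(X_k=i)-p_i)$ and likewise for $\hat q_i-q_i$, and interchanging the two sums, this collapses to $n^{-1}\sum_{k=1}^n(\xi_k-\S(\bp,\bq))$, where $\xi_k=\a(q_{X_k}/p_{X_k})^{1-\a}+(1-\a)(p_{Y_k}/q_{Y_k})^\a$ is an independent copy of $V_n^{(\a)}$ and $E\xi_k=EV_n^{(\a)}=\S(\bp,\bq)$. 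The dependence between the $X$- and $Y$-marginals is thereby absorbed into a single univariate summand; this is exactly why \eqref{eq:v} is indexed over pairs with the joint weights $p_{ij}$, and it is what spares us any separate covariance analysis.

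Given this linearization, assertion~(ii) reduces to the Lindeberg--Feller CLT for the row-wise i.i.d.\ array $\{(\xi_{n,k}-EV_n^{(\a)})/\sqrt{n}\}_{k=1}^n$. Because $V_n^{(\a)}$ is supported in $[0,B_n]$ with $B_n\le\sum(q_i/p_i)^{1-\a}+\sum(p_i/q_i)^\a$, hypothesis \eqref{eq:pc2} forces $B_n(n\,Var V_n^{(\a)})^{-1/2}\to0$; hence the centered summands, rescaled by $(n\,Var V_n^{(\a)})^{1/2}$, are uniformly bounded by a null sequence, the Lindeberg ratio vanishes, and (ii) follows. Assertion~(i) is the companion law of large numbers and falls out of the same expansion (alternatively from the LLN arguments of \cite{RS12}), using $\inf_n\C(V_n^{(\a)})>0$ to keep the scaling nondegenerate.

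The real work is in proving that the Taylor remainder is $o_P$ of the fluctuation scale $(Var V_n^{(\a)}/n)^{1/2}$. Its second-order part carries the Hessian of $f$, whose entries are proportional to $p_i^{\a-2}q_i^{1-\a}$, $p_i^\a q_i^{-\a-1}$, and $p_i^{\a-1}q_i^{-\a}$; for $0<\a<1$ these diverge as the cell probabilities shrink, and in the infinite-table regime the smallest is of order $p_\ast\le1/m\to0$. Controlling these singular derivatives uniformly over a growing number $m$ of categories is the main obstacle. I would overcome it by working on the high-probability event that every $\hat p_i$ and $\hat q_i$ lies within a fixed factor of its mean: under the LD condition \eqref{eq:ld} one has $np_\ast\to\infty$ faster than $n^\tau$, so a Bernstein/union bound renders the complementary event negligible, while on the good event the relative errors have standard deviation $O((np_i)^{-1/2})$, and the resulting quadratic sum is dominated by the linear term precisely by \eqref{eq:pc2}. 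This is the delicate tail analysis already flagged after Theorem~\ref{thm:1}, and it is where the stronger hypothesis \eqref{eq:pc2}---rather than \eqref{eq:ld} alone---is consumed.

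Finally, (iii) follows from (ii) by the delta method for the logarithm. Since $\D(\bhp,\bhq)-\D(\bp,\bq)=\frac{1}{\a-1}\log(\S(\bhp,\bhq)/\S(\bp,\bq))$, the consistency in (i) and the expansion $\log(1+u)=u+o(u)$ give
\[
\sqrt{n}(\a-1)\big(\D(\bhp,\bhq)-\D(\bp,\bq)\big)=\frac{\sqrt{n}\,(\S(\bhp,\bhq)-\S(\bp,\bq))}{\S(\bp,\bq)}+o_P(1).
\]
Dividing by $\C(V_n^{(\a)})=(Var V_n^{(\a)})^{1/2}/\S(\bp,\bq)$ (recall $EV_n^{(\a)}=\S(\bp,\bq)$) turns the right-hand side into the standardized statistic of (ii), so Slutsky's lemma together with $\inf_n\C(V_n^{(\a)})>0$ yields the stated $N(0,1)$ limit.
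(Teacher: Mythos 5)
Your proposal is correct and follows essentially the same route as the paper's own proof: a bivariate Taylor expansion whose linear part collapses to the centered i.i.d.\ sum of copies of $V_n^{(\a)}$, a good-event/bad-event control of the quadratic remainder (union bound plus moment or Bernstein bounds under \eqref{eq:ld} for the bad event, expectation bounds combined with \eqref{eq:pc2} on the good event), and the logarithm expansion with Slutsky's lemma for part $(iii)$. The only noteworthy deviation is your Lindeberg verification via the bounded support of $V_n^{(\a)}$ together with \eqref{eq:pc2}, which is a simpler (and valid) substitute for the truncation argument the paper carries out in Lemma~\ref{lem:lc}.
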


\begin{remark}[Plug-in Bias]\label{rem2b} Note that, similarly as in Remark~\ref{rem2a}, we have  
$ E\S(\bhp,\bhq)/\S(\bp,\bq)\le 1$  and, by a similar argument as before,   Theorem~2$(i)$ implies $$E\D(\bhp,\bhq)-\D(\bp,\bq)\to~0\qquad\text{as}\ n,m\to \infty.$$ \end{remark}
\begin{example}[Symmetric Divergence for Power Laws] Consider the symmetric divergence ${\cal D}_{1/2}(\bp,\bq)$  with independent marginals,  which often is the  case  of interest in  NGS applications. Note that in this situation $Var V_n^{(1/2)}=1/2-(\sum \sqrt{p_i q_i})^2/2$. Suppose additionally that   $p_i=H^{-1}(\beta_1,m)/(i^{\beta_1}l_1(i))$ and  $q_i=H^{-1}(\beta_2,m)/(i^{\beta_2}l_2(i))$, $(i=1,\ldots,m)$ where the notation is as in Example~2 with $0<\beta_1\ne\beta_2<1$. Then $$Var V_n^{(1/2)}\sim \frac{1}{2}-\frac{\sqrt{(1-\b_1)(1-\b_2)}}{2-\b_1-\b_2}$$ and, consequently,  \eqref{eq:pc2} is seen as  equivalent to $m/\sqrt{n}\to 0$  (cf. also Remark~\ref{rem2} above). 
\end{example}

 With some additional effort, the two CLT results of this section may be extended to degenerate projections. This is discussed in the next section. 
 
\subsection{CLTs for Degenerate Projections }
In case of  a  degenerate projection,    the linear term of the power sum Taylor's expansion disappears (cf. formula 
(B.6) in the appendix) and  the condition \eqref{eq:pc} is no longer needed. However,  the  LD  assumption \eqref{eq:ld} has  to be slightly strengthened in order to establish the asymptotic results  for the leading (quadratic) term of the appropriate expansion. 
\subsubsection{Chi-Square Statistic CLT}
The following lemma describing  the chi-square statistic CLT  may be of independent interest for models of  sparse contingency tables. For a recent discussion of a normal approximation to the chi-square statistic in such settings, see, e.g.,  \cite{chi2approx2013}.  Here  we  apply the chi-square  CLT  formulated below     to obtain   weak limits  for the quadratic  terms in the  entropy and divergence Taylor's expansions  leading to Theorems~3 and 4 described in the next subsection. To begin, consider    
a pair of distributions $(\bp, \bq)$  and a set of positive weights ${\bm r}=(r_i)_{i=1}^m$ and  define the corresponding  chi-square ($\chi^2$) distance function as   $$\ch_{\bm r}(\bp,\bq)=n \sum \frac{(p_i-q_i)^2}{r_i}.$$
Note that,  for instance,   the $\chi^2$-distance  statistic between the empirical marginals $(\bhp,\bhq)$ is  obtained by setting  $r_i=p_i+q_i$ 
$$\ch_{\bm r}(\bhp,\bhq)=n\sum \frac{(\p_i-\q_i)^2}{p_i+q_i}$$ 
and  the  Pearson $\chi^2$-statistic  is  obtained by setting $r_i=p_i$ 
\begin{equation}\label{eq:ch1}
\chp(\bhp,\bp)=n\sum\frac{(\p_i-p_i)^2}{p_i}.
\end{equation} 
  Below we denote   $\ch_\bu(\bhu,\bu)=:\ch_\bu$.

\begin{lemma}\label{lem:ch1}   Let $(p_{ij})_{i,j=1}^m$ be the bivariate distribution of $Z=(X,Y)$  with  $X$ and $Y$  having   marginals  $(p_i)_{i=1}^m$ and $(q_i)_{i=1}^m$  where $p_i=q_i>0$. Assume $m\to \infty $ as $n\to \infty $ and 
\begin{align}
& (mn)^{-1}\sum \max (p_i^{-1},p_i^{-2}m^{-1})\to 0,  \label{eq:con2}
\end{align}
Then  as  $n\to \infty$
\begin{itemize}
\item[(i)] $\displaystyle {\frac{\ch_\bp(\bhp,\bp)-m}{\sqrt{2m}}\Rightarrow N(0,1)},$ \\  and if additionally 
\begin{equation}\label{eq:bd}
\sup_n \max_{ij} \frac{p_{ij}}{p_ip_j}=B<\infty 
\end{equation} then also \vspace{.071in}
\item[(ii)] $\displaystyle {\frac{\ch_{2\bp}(\bhp,\bhq)-\mu_n }{\sqrt{2}\gamma_n}\Rightarrow N(0,1)},$ \\ where \begin{align} \mu_n &=\sum_i (1-p_{ii}/p_i)\nonumber \\  \gamma_n^2 &=\sum_i \frac{(p_i-p_{ii})^2}{p_i^2}+\sum_{1\le i\ne j\le m} \frac{(p_{ij}+p_{ji})^2}{4p_ip_j}.\label{eq:sn}\end{align}
\end{itemize}
\end{lemma}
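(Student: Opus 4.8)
The plan is to prove both parts via the Lindeberg–Feller central limit theorem for triangular arrays, writing each chi-square statistic as a sum of row-wise independent contributions and verifying a Lyapunov-type condition. For part $(i)$, observe that $\chp(\bhp,\bp)=\sum_i (\hat p_i - p_i)^2/p_i$ where $\hat p_i = n^{-1}\sum_k I(X_k=i)$. The natural approach is to expand the square and reorganize the statistic as a double sum over the sample indices: $\chp(\bhp,\bp)=n^{-1}\sum_{k,l} \xi_{kl}$ where $\xi_{kl}=\sum_i (I(X_k=i)-p_i)(I(X_l=i)-p_i)/p_i$. Splitting into diagonal ($k=l$) and off-diagonal terms, the diagonal contributes a mean of order $m-1$ (matching the centering $m$), and the off-diagonal part is a degenerate $U$-statistic of the form $\frac{2}{n}\sum_{k<l} h(X_k,X_l)$ with $h(x,y)=\sum_i (I(x=i)-p_i)(I(y=i)-p_i)/p_i$. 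First I would compute $E\chp$ exactly (it equals $m-1$) and the variance, confirming that $\mathrm{Var}\,\chp = 2(m-1) + O(\sum_i p_i^{-1}/n)$ so that the normalization $\sqrt{2m}$ is correct under \eqref{eq:con2}. Then I would apply a CLT for degenerate $U$-statistics (or martingale CLT) to the off-diagonal part, checking that the fourth-moment / Lyapunov condition reduces precisely to a tail bound controlled by $(mn)^{-1}\sum_i \max(p_i^{-1}, p_i^{-2}m^{-1})\to 0$.

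For part $(ii)$, the statistic $\ch_{2\bp}(\bhp,\bhq)=\frac{n}{2}\sum_i (\hat p_i-\hat q_i)^2/p_i$ involves the joint law $(p_{ij})$ through the dependence between $X$ and $Y$. The plan is to write $\hat p_i - \hat q_i = n^{-1}\sum_k \bigl(I(X_k=i)-I(Y_k=i)\bigr)$ and set $D_{k,i}=I(X_k=i)-I(Y_k=i)$, so that $\ch_{2\bp}=\frac{1}{2n}\sum_{k,l}\sum_i D_{k,i}D_{l,i}/p_i$. Again I separate diagonal and off-diagonal parts. The diagonal part has mean $\frac{1}{2}\sum_i E[D_{k,i}^2]/p_i$; since $E[D_{k,i}^2]=p_i+q_i-2p_{ii}=2(p_i-p_{ii})$ using $p_i=q_i$, this gives exactly $\mu_n=\sum_i(1-p_{ii}/p_i)$, confirming the centering. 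The variance computation should reproduce $2\gamma_n^2$ with the two groups of terms in \eqref{eq:sn} arising respectively from the $i=j$ and $i\ne j$ cross-terms in the off-diagonal $U$-statistic kernel $h(Z_k,Z_l)=\frac{1}{2}\sum_i D_{k,i}D_{l,i}/p_i$. The boundedness condition \eqref{eq:bd} enters to control the joint moments $p_{ij}/(p_ip_j)$ uniformly, which is exactly what is needed to bound the kernel's variance and higher moments in terms of the marginals.

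The main obstacle, in both parts but especially in $(ii)$, will be verifying the Lindeberg/Lyapunov negligibility condition for the degenerate off-diagonal $U$-statistic in the genuinely triangular regime where $m\to\infty$ and the summand kernel itself changes with $n$. Unlike the fixed-$m$ case, one cannot invoke a standard degenerate $U$-statistic CLT as a black box; instead I expect to use the martingale central limit theorem applied to the sequence of conditional increments, or equivalently a fourth-moment criterion of the type $E[h(Z_1,Z_2)^4]/\gamma_n^4 \cdot n^{-1}\to 0$ together with control of the eigenvalue structure of the kernel. The delicate step is showing that no single coordinate $i$ with a very small $p_i$ dominates the sum — precisely the role of \eqref{eq:con2}, since the term $p_i^{-2}m^{-1}$ controls the contribution of rare cells to the fourth moment while $p_i^{-1}$ controls the diagonal fluctuations. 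I would therefore devote the bulk of the argument to a careful moment expansion, isolating the leading variance term $\gamma_n^2$ and bounding the remainder using \eqref{eq:bd} and \eqref{eq:con2}, and then invoke the martingale CLT to conclude weak convergence to $N(0,1)$.
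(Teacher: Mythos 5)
Your proposal is correct and follows essentially the same route as the paper's proof: the identical diagonal/off-diagonal decomposition (your off-diagonal kernel $h(x,y)=\sum_i (I(x=i)-p_i)(I(y=i)-p_i)/p_i$ simplifies to the paper's $\sum_i p_i^{-1}I(x=y=i)-1$, and your centered diagonal sum is exactly the paper's negligible remainder term), the same identification of $\mu_n$ as the mean of the diagonal part and of $\gamma_n^2$ as the second moment of the bivariate kernel, and the same use of \eqref{eq:con2} and \eqref{eq:bd} to control the diagonal fluctuation and the kernel moments. The only deviation is tactical rather than substantive: where you expect to have to prove the triangular-array degenerate $U$-statistic CLT by hand (martingale increments, fourth-moment criterion), the paper invokes it as a black box --- Koroljuk and Borovskich, Theorem~4.7.3, which is stated for kernels $h_n$ depending on $n$ and whose two hypotheses, $n^{-1}Eh_n^4/\sigma_n^4\to 0$ and $E\Psi_n^2/\sigma_n^4\to 0$ with $\Psi_n(y,z)=E[h_n(X,y)h_n(X,z)]$, are precisely the fourth-moment and eigenvalue-structure conditions you anticipated verifying.
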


 \begin{remark}\label{rem3}  Note that for $\chu$ the condition \eqref{eq:con2} simplifies  to $m/n\to 0$.
 \end{remark}
 
  \begin{remark}\label{rem4}  Note that under the assumption \eqref{eq:bd} we have $m -2B\le \gamma_n^2\le m+B^2$ and therefore $\gamma_n^2\sim m$. In particular,  if $p_{ij}=p_ip_j$ then  $\mu_n=\gamma_n^2=m-1$.
\end{remark} The proof of the result may be found in the appendix. 
Its application  is  discussed next.  
\subsubsection{Pure Noise and Equal Marginals CLTs}
  The first result  covers  the case of Renyi entropy  when $\bp=\bu$. The proof is  outlined in the appendix. Recall that for real $a$ and integer $k$ we define $\binom{a}{k}=a(a-1)\cdots (a-k+1)/k!$ 
\begin{theorem}[Uniform Entropy CLT]\label{u1} Assume $m\to \infty $ as $n\to \infty $ and  $m^2/n =o (n^{-\tau})$ for $\tau>0$. Then \begin{itemize}
\item[(i)] $ \frac{n\binom{\a}{2}^{-1}[m^{\a-1}\S(\bhu)-1]-m}{\sqrt{2m}}\Rightarrow N(0,1)$
\item[(ii)] $   \frac{n[\H(\bhu) -\log m- (1-\a)^{-1}{\log(1+\binom{\a}{2}\frac{m}{n}})]}{\a\sqrt{m/2}}\Rightarrow N(0,1).$
\end{itemize}
\end{theorem}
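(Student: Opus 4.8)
The plan is to reduce both assertions to the chi-square CLT of Lemma~2(i). Under the present hypotheses that lemma applies: by Remark~3 its condition for $\chu$ reduces to $m/n\to0$, which is implied by $m^2/n=o(n^{-\tau})$, so that $(\chu-m)/\sqrt{2m}\Rightarrow N(0,1)$ is available as the driving limit.

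First I would write $\u_i=1/m+(\u_i-1/m)$ and set $\ve_i=m\u_i-1$, so that $\sum\ve_i=0$ and $\S(\bhu)=\sum\u_i^\a=m^{-\a}\sum(1+\ve_i)^\a$. Taylor-expanding $(1+\ve_i)^\a$ to second order, the linear term $\a\sum\ve_i$ vanishes identically — this is exactly the degeneracy $Var\,W_n^{(\a)}=0$ recorded in Remark~1 for $\bp=\bu$ — so the leading stochastic contribution is the quadratic term $\binom{\a}{2}\sum\ve_i^2$. Since $\sum\ve_i^2=m^2\sum(\u_i-1/m)^2=(m/n)\,\chu$, this yields
$$m^{\a-1}\S(\bhu)=1+\binom{\a}{2}\frac{\chu}{n}+m^{-1}R,\qquad R=\sum_i\Big[(1+\ve_i)^\a-1-\a\ve_i-\binom{\a}{2}\ve_i^2\Big].$$
Rearranging gives $n\binom{\a}{2}^{-1}[m^{\a-1}\S(\bhu)-1]-m=(\chu-m)+n\binom{\a}{2}^{-1}m^{-1}R$, so assertion (i) follows from Lemma~2(i) and Slutsky's theorem once the normalized remainder $n\binom{\a}{2}^{-1}m^{-1}R/\sqrt{2m}$ is shown to be $o_P(1)$.

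Controlling $R$ is the crux, and the main obstacle. Because $n\u_i\sim\text{Bin}(n,1/m)$ has mean $n/m\to\infty$, each $\ve_i$ is centered with $Var\,\ve_i\sim m/n$, so $\ve_i=O_P(\sqrt{m/n})\to0$ and a union bound shows $\max_i|\ve_i|\to0$ in probability. On this good event the elementary estimate $|(1+x)^\a-1-\a x-\binom{\a}{2}x^2|\le C_\a|x|^3$ gives $|R|\le C_\a\sum_i|\ve_i|^3$; since $E|\ve_i|^3\sim(m/n)^{3/2}$ one has $E\sum_i|\ve_i|^3=O(m^{5/2}/n^{3/2})$, whence by Markov's inequality $n m^{-1}|R|/\sqrt{2m}=O_P(m/\sqrt{n})$, which tends to $0$ precisely under $m^2/n\to0$. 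The contribution of the complementary small-probability event (some $\ve_i$ large, where the cubic bound fails) is where the strengthened hypothesis $m^2/n=o(n^{-\tau})$ is genuinely used: the polynomial margin $n^{-\tau}$ feeds a Bernstein/large-deviation tail bound that renders the bad-event term negligible.

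For assertion (ii) I would linearize the logarithm. From the expansion above, $\H(\bhu)=\frac{1}{1-\a}\log\S(\bhu)=\log m+\frac{1}{1-\a}\log\big(1+\binom{\a}{2}\chu/n+m^{-1}R\big)$, and since $\binom{\a}{2}m/n\to0$ both this argument and that of the centering $\log(1+\binom{\a}{2}m/n)$ lie near $1$; a first-order expansion of $\log$ about $x=m$ gives
$$\H(\bhu)-\log m-\frac{1}{1-\a}\log\!\Big(1+\binom{\a}{2}\frac{m}{n}\Big)\approx\frac{1}{1-\a}\cdot\frac{\binom{\a}{2}}{n}(\chu-m).$$
Multiplying by $n/(\a\sqrt{m/2})$ and using $\binom{\a}{2}/(\a(1-\a))=-1/2$ together with $2\sqrt{m/2}=\sqrt{2m}$, the left-hand side reduces to $-(\chu-m)/\sqrt{2m}$, which converges to $N(0,1)$ by Lemma~2(i) (the sign being immaterial by symmetry of the normal law). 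The remainder and linearization errors are governed by the same $O_P(m/\sqrt{n})$ estimate as in part~(i), so a final application of Slutsky's theorem completes the proof.
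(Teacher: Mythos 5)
Your proposal is correct, and its skeleton is the same as the paper's: Taylor-expand $\S(\bhu)$ around the uniform distribution, note that the linear term vanishes (the degenerate projection), identify the quadratic term with $\binom{\a}{2}\chu/n$, invoke Lemma~2$(i)$ as the driving CLT, and log-linearize for part $(ii)$. Where you genuinely differ is in the remainder control, which is indeed the crux. The paper keeps the Lagrange (third-order) form of the remainder, factors out $\max_i|m\u_i-1|$ against the chi-square sum, and is then forced to control $P\left(\max_i|m\u_i-1|>\ve/(2C\sqrt{m})\right)$ --- a union bound over $m$ cells with a threshold shrinking like $m^{-1/2}$. That step is exactly where the polynomial margin $m^2/n=o(n^{-\tau})$ enters the paper's argument, via the high-order binomial moment bound of Lemma~A.1 with $d$ chosen so large that $m(m^2/n)^d\to 0$. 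Your route avoids this entirely: on the good event you bound $|R|\le C_\a\sum_i|\ve_i|^3$ and apply Markov's inequality with $E|\ve_i|^3=O\left((m/n)^{3/2}\right)$ (obtainable from the fourth-moment bound of Lemma~A.1 via Lyapunov's inequality), giving the normalized remainder as $O_P(m/\sqrt{n})$. Working with the summed third moments rather than the maximum is more efficient; it is cleaner than the paper's treatment and in fact establishes both assertions under the weaker hypothesis $m^2/n\to 0$, so your argument proves a slightly stronger statement.

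One correction to your own accounting of where the hypotheses are used: the bad event does not require any Bernstein or large-deviation machinery, and the polynomial margin $n^{-\tau}$ is not ``genuinely used'' anywhere in your proof. Since the whole argument is a convergence-in-probability argument, the bad event contributes only through its probability, and $P\left(\max_i|\ve_i|>\delta\right)\le m\,\delta^{-2}\,Var(\ve_i)\le \delta^{-2}m^2/n\to 0$ already by Chebyshev's inequality and the union bound, using nothing beyond $m^2/n\to 0$. The margin $n^{-\tau}$ is needed only in the paper's proof, and only because of its less efficient factoring of the cubic remainder.
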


Our second CLT result is  the following theorem for Renyi divergence  when  $\bp=\bq$. The proof is  again deferred to  the appendix.

 \begin{theorem}[Degenerate Divergence  CLT]\label{u2} Let $(p_{ij})_{i,j=1}^m$ be the bivariate distribution of $Z=(X,Y)$  with  $X$ and $Y$  having    marginals  $\bp=(p_i)_{i=1}^m$ and $\bq=(q_i)_{i=1}^m$  where $p_i=q_i>0$. Let $\mu_n$ and $\gamma_n^2$ be given by \eqref{eq:sn}. Assume $m\to \infty $ as $n\to \infty $ and that \eqref{eq:bd} holds,  as well as that 
\begin{equation}\label{eq:cond3}  \max \left\{\frac{1}{nm\min p_i^2},\frac{m}{n\min p_i}\right\}=o(n^{-\tau}). 
\end{equation} Then \begin{itemize}
\item[(i)] $ \frac{n(\a(\a-1))^{-1}[\S(\bhp,\bhq)-1]-\mu_n}{\sqrt{2}\gamma_n}\Rightarrow N(0,1)$
\item[(ii)] $   \frac{n[\D(\bhp,\bhq)- (\a-1)^{-1}\log(1+\a(\a-1)\frac{\mu_n}{n})]}{\a\sqrt{2}\gamma_n}\Rightarrow N(0,1).$
\end{itemize}
\end{theorem}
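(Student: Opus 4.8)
The plan is to reduce Theorem~\ref{u2} to the chi-square CLT of Lemma~\ref{lem:ch1}(ii) via a second-order Taylor expansion of the power sum about the degenerate point $\bp=\bq$. Write $\delta_i=\p_i-p_i$ and $\epsilon_i=\q_i-q_i$ and, using $q_i=p_i$, factor
\[
\p_i^\a\q_i^{1-\a}=p_i\,(1+\delta_i/p_i)^\a(1+\epsilon_i/p_i)^{1-\a}.
\]
Expanding both factors and summing over $i$, the first-order terms are $\a\sum\delta_i+(1-\a)\sum\epsilon_i=0$ because $\bhp,\bhq$ are probability vectors; this is precisely the vanishing of the linear term announced for degenerate projections (Remark~\ref{rem1}). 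The surviving quadratic part is
\[
\binom{\a}{2}\sum\frac{\delta_i^2}{p_i}+\a(1-\a)\sum\frac{\delta_i\epsilon_i}{p_i}+\binom{1-\a}{2}\sum\frac{\epsilon_i^2}{p_i},
\]
and since $\binom{\a}{2}=\binom{1-\a}{2}=\tfrac{\a(\a-1)}{2}$ and $\a(1-\a)=-\a(\a-1)$, this collapses to $\tfrac{\a(\a-1)}{2}\sum(\delta_i-\epsilon_i)^2/p_i$. Because $p_i=q_i$ gives $\delta_i-\epsilon_i=\p_i-\q_i$, the quadratic part equals $\tfrac{\a(\a-1)}{n}\ch_{2\bp}(\bhp,\bhq)$, whence
\[
\frac{n}{\a(\a-1)}\bigl[\S(\bhp,\bhq)-1\bigr]=\ch_{2\bp}(\bhp,\bhq)+\frac{n}{\a(\a-1)}R_n ,
\]
and part~(i) follows from Lemma~\ref{lem:ch1}(ii) and Slutsky's theorem once the scaled remainder $R_n$ is shown negligible. (Note that \eqref{eq:cond3} implies condition \eqref{eq:con2} of that lemma: its first clause bounds $(m^2n)^{-1}\sum p_i^{-2}$ and its second clause bounds $(mn)^{-1}\sum p_i^{-1}$, which together dominate $(mn)^{-1}\sum\max(p_i^{-1},p_i^{-2}m^{-1})$.)

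The main obstacle is controlling the Taylor remainder $R_n$, which collects all cubic and higher terms. Using $\gamma_n\sim\sqrt m$ (Remark~\ref{rem4}) I must show $nR_n/\sqrt m\to0$ in probability. The leading neglected contributions are sums of the type $\sum\delta_i^3/p_i^2$, $\sum\delta_i^2\epsilon_i/p_i^2$, and their permutations. With the binomial bounds $E\delta_i^2\asymp p_i/n$ and $E|\delta_i|^3\asymp(p_i/n)^{3/2}$ (valid because \eqref{eq:cond3} forces $n\min_i p_i\to\infty$), the triangle inequality and Cauchy--Schwarz give
\[
E\Bigl|\sum\frac{\delta_i^3}{p_i^2}\Bigr|\le\sum\frac{E|\delta_i|^3}{p_i^2}\lesssim n^{-3/2}\sum p_i^{-1/2}\lesssim n^{-3/2}\,m\,(\min_i p_i)^{-1/2},
\]
so that $n/\sqrt m$ times this quantity is $O\bigl((m/(n\min_i p_i))^{1/2}\bigr)\to0$ by the second clause of \eqref{eq:cond3}; the mixed terms are bounded identically. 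A technical point to dispatch here is that the binomial expansion of $x\mapsto x^\a$ is only valid on the event where $\p_i,\q_i$ stay near $p_i$; since $n\min_i p_i\to\infty$, the probability of the complementary large-deviation event is negligible, and I would truncate there, exactly as in the tail analysis underlying the bias bound discussed after Theorem~\ref{thm:1}.

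Part~(ii) then follows from part~(i) by the delta method for the logarithm. Writing $\ch_{2\bp}(\bhp,\bhq)=\mu_n+\sqrt2\,\gamma_n T_n$ with $T_n\Rightarrow N(0,1)$, the expansion above gives $\S(\bhp,\bhq)=1+\tfrac{\a(\a-1)\mu_n}{n}+\tfrac{\a(\a-1)\sqrt2\,\gamma_n}{n}T_n+R_n$, so
\[
\D(\bhp,\bhq)-\frac{1}{\a-1}\log\Bigl(1+\frac{\a(\a-1)\mu_n}{n}\Bigr)=\frac{1}{\a-1}\log\frac{\S(\bhp,\bhq)}{1+\a(\a-1)\mu_n/n}.
\]
Since $\mu_n\sim m$ and $\gamma_n\sim\sqrt m$ with $m^2/n\to0$ (all implied by \eqref{eq:cond3}), the argument of the logarithm is $1+o_P(1)$; expanding $\log(1+x)=x+O(x^2)$ and using the cancellation $(\a-1)^{-1}\a(\a-1)=\a$, one finds that $\tfrac{n}{\a\sqrt2\,\gamma_n}$ times the left-hand side equals $T_n+o_P(1)$, the negligibility of the remainder and of the quadratic log-correction being controlled exactly as in the previous paragraph. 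Slutsky's theorem then gives the claim.
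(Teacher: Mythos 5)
Your proposal is correct and follows essentially the same route as the paper's proof: a second-order Taylor expansion of $\S(\bhp,\bhq)$ about the degenerate point $\bp=\bq$, vanishing of the linear term, identification of the quadratic term with $\frac{\a(\a-1)}{n}\ch_{2\bp}(\bhp,\bhq)$, an appeal to Lemma~\ref{lem:ch1}$(ii)$ (noting that \eqref{eq:cond3} implies \eqref{eq:con2}), and a logarithmic expansion plus Slutsky for part $(ii)$. The only minor difference is in how the cubic remainder is handled: you truncate on the event that all $\p_i/p_i$ and $\q_i/q_i$ stay near $1$ and then apply direct third-moment (Markov) estimates, whereas the paper truncates on the same event but factors out $\max_i|\p_i/p_i-1|$ and reuses the Pearson $\chi^2$ CLT of Lemma~\ref{lem:ch1}$(i)$; both arguments are valid under \eqref{eq:cond3}.
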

\begin{remark} Note that for $\bp=\bq=\bu$ the condition \eqref{eq:cond3} reduces to $m^2/n=o(n^{-\tau})$ required in  Theorem~3.

\end{remark}

\subsubsection{Random Sample Size}
When analyzing  NGS  data   some  part of the sequences reads   is frequently removed for technical reasons, for instance, due to poor amplification or reading  errors (see next section). In such cases  one effectively  deals with  a molecular  sample of random size.    Our  CLT   results derived  earlier may be extended to this case as well, with  the help of following simple result described in Theorem~5 below. Its  various versions have been discussed, for instance,  in the context of random allocations \citep[see, e.g.,][]{Kolchin78}. 

\begin{theorem}[Randomized  Sample CLT]\label{rcpt}  Let $(Z_n)_{n=1}^\infty$ be  a  sequence of bivariate variables    supported on an $m_n\times m_n$ integer lattice with distribution  $(p_{ij})_{i,j=1}^{m_n}$.   Let 
$(\hat{Z}_n)=(\hat{p}_{ij})_{i,j=1}^{m_n}$ $(n=1,2,3, \ldots,)$ be the sequence of the  empirical estimates, each  based on an iid sample of (deterministic) size $n$. Suppose that   the statistic  $\G_n=\G_n(\hat{p}_{ij})$  satisfies   $b_n(\G_n-a_n)\Rightarrow N(0,1)$ as $n\to \infty$ with some non-random  $(a_n, b_n)$.   Let    $(\nu_n)_{n=1}^\infty$ be a sequence of random variables independent of $(\hat{Z}_n)_{n=1}^\infty$  and  following  the binomial distributions $bin(n,\tau_n)$ with $0<\inf_n \tau_n\le \sup_n \tau_n<1$.  Then  also  $$b_{\nu_n}(\G_{\nu_n}-a_{\nu_n})\Rightarrow N(0,1).$$
  \end{theorem}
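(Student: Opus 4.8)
The plan is to recognize Theorem~\ref{rcpt} as an Anscombe-type central limit theorem for a randomly indexed sequence, in which the decisive simplification is that the random index $\nu_n$ is \emph{independent} of the data $(\hat Z_n)_{n=1}^\infty$ and hence of the entire family $\{\G_k\}_{k\ge 1}$. Because of this independence I would avoid the delicate uniform-continuity-in-probability (fluctuation) hypothesis that the general Anscombe theorem requires; instead I can condition on the value of $\nu_n$ and reduce the assertion to the convergence of a \emph{mixture} of the one-dimensional laws $F_k(x):=P(b_k(\G_k-a_k)\le x)$. The goal is then simply to show that $P\!\left(b_{\nu_n}(\G_{\nu_n}-a_{\nu_n})\le x\right)\to\Phi(x)$ for every $x\in\mathbb{R}$, where $\Phi$ is the standard normal distribution function; since $\Phi$ is continuous, this pointwise convergence is exactly weak convergence to $N(0,1)$.

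First I would show that $\nu_n\to\infty$ in probability. Writing $\tau_0=\inf_n\tau_n>0$, we have $E\nu_n=n\tau_n\ge n\tau_0\to\infty$ and $Var\,\nu_n=n\tau_n(1-\tau_n)\le n/4$, so for every fixed $K$ and all $n$ large enough that $n\tau_0/2\ge K$ the inclusion $\{\nu_n\le K\}\subseteq\{|\nu_n-n\tau_n|\ge n\tau_0/2\}$ holds, whence Chebyshev's inequality gives
\[
P(\nu_n\le K)\le \frac{Var\,\nu_n}{(n\tau_0/2)^2}\le \frac{1}{n\tau_0^2}\longrightarrow 0 .
\]
This is the only place the binomial structure enters, and only through $\inf_n\tau_n>0$; the assumption $\sup_n\tau_n<1$ is not needed for the argument and merely guarantees that the retained sample size is genuinely random (of variance of exact order $n$).

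Next, using independence, I would condition on $\nu_n$. On the event $\{\nu_n=k\}$ the random normalizers reduce to the deterministic $a_k,b_k$ and $\G_{\nu_n}$ coincides with $\G_k$, while $\{\nu_n=k\}$ is independent of $\G_k$; partitioning over the value of $\nu_n$ therefore yields
\[
P\!\left(b_{\nu_n}(\G_{\nu_n}-a_{\nu_n})\le x\right)=\sum_{k\ge 1}P(\nu_n=k)\,F_k(x),
\]
and by hypothesis $F_k(x)\to\Phi(x)$ as $k\to\infty$ for every $x$. The last step is the standard truncation of the mixing index: fixing $x$ and $\varepsilon>0$, choose $K$ so large that $|F_k(x)-\Phi(x)|<\varepsilon$ for all $k>K$, and split the sum at $K$. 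Using $\sum_{k>K}P(\nu_n=k)\le 1$ and the trivial bound $|F_k-\Phi|\le 1$ gives
\[
\Big|\sum_{k\ge1}P(\nu_n=k)\,F_k(x)-\Phi(x)\Big|\le P(\nu_n\le K)+\varepsilon ,
\]
so letting $n\to\infty$ (whereupon $P(\nu_n\le K)\to0$ by the first step) and then $\varepsilon\downarrow0$ delivers the claim. The exceptional events of very small or zero sample size, on which $\G_{\nu_n}$ may be ill-defined, have probability at most $P(\nu_n\le K)\to0$ and are absorbed into the first term, so they cause no difficulty.

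The conceptual crux — and essentially the only place where genuine care is required — is the validity of the reduction in the conditioning step: it works precisely because $\nu_n$ is assumed independent of the whole data sequence, which is exactly what frees us from having to verify an Anscombe-type fluctuation bound controlling $b_k(\G_k-a_k)-b_{k'}(\G_{k'}-a_{k'})$ for indices $k,k'$ close to the center $n\tau_n$. Once independence is in hand, everything else is the routine truncation argument above, and no further regularity of the normalizing sequences $(a_k,b_k)$ is needed beyond their being deterministic.
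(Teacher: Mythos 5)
Your proof is correct and takes essentially the same route as the paper's: both exploit the independence of $\nu_n$ from the data to reduce the randomized statistic's law to the mixture $\sum_k P(\nu_n=k)\,P\left(b_k(\G_k-a_k)\le x\right)$ and then combine the pointwise convergence of these distribution functions to $\Phi(x)$ with concentration of $\nu_n$. The only cosmetic difference is that the paper localizes $\nu_k$ in a weak-law window around $k\tau_k$, whereas you use the marginally simpler fact that $P(\nu_n\le K)\to 0$ for each fixed $K$; both steps rest on $\inf_n\tau_n>0$ via Chebyshev's inequality (and, as you correctly note, neither uses $\sup_n\tau_n<1$).
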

  \begin{proof}
  Denote by  $\G_{n_k}$ the random variable $\G_{\nu_k}$   conditional on the event  $\nu_k=n_k$ and by  $\Phi$  the distribution function of the standard normal random variable. 
 By assumption,  for  any  real $x$   we have $ P(\G_{n_k}\le x) \to \Phi(x)$ provided that $n_k\to \infty$ as $k\to\infty$.
 Let $\ve>0$ be sufficiently small and define $C_\ve(k_0)=\{n_k:  k(\tau_k-\ve)\le n_k\le k(\tau_k+\ve), k>k_0\}$. Note that  by the weak law of large numbers  $P(\nu_k \in C_\ve(k_0))\to 1$ as $k_0\to \infty$. Therefore 
\begin{align*} 
P(\G_{\nu_k}\le x, \nu_k\in C_\ve(k_0))  &= \sum_{n_k\in C_\ve(k_0)} P(\G_{n_k}\le x) P(\nu_k=n_k) \\
&= (\Phi(x)+\delta(k_0)) P(\nu_k \in C_\ve(k_0))
\end{align*} where $\delta(k_0)\to 0$ as $k_0\to \infty$. Accordingly,  as $k_0\to \infty$ the left-hand side converges to $\lim_k P(\G_{\nu_k}\le x)$ and  the right-hand side to $\Phi(x)$ and the result follows.
   \end{proof}

\section{Examples and NGS Applications}
We start by   providing some  numerical examples illustrating   that,  in general, the CLT results discussed above do not  hold without  assumptions on the relative rate of $m$ and  $n$. Next, we  show  two examples of  applicability of our results  to analyzing biodiversity of NGS data.  The first one is concerned with  comparing the diversity of T-cell receptor populations in transgenics  mice, whereas the second one aims at identifying the hepatocellular carcinoma  transcription profiles  in humans.  For the purpose of the   T-cell receptors   example, we  propose  a  sequential statistical procedure of NGS signal filtering based on our CLT results from the previous sections. We begin by  pointing out  to some subtleties   in   the CLT results discussed in Section~3.
\subsection{Power Law and Pure Noise Models}
Consider the power law model from Example~2 in Section~\ref{sec:ld} with $\beta=1$ and $l(x)\equiv 1$. Note that  in this case 
$ (n \min_i p_i)^{-1} \sim m \log m/\,n$ as well as  $\sum p_i^{\alpha-1}(n VarW_n^{(\alpha)})^{-1/2} \sim O(m(\log^{2\a} m/n)^{1/2})$ and therefore the assumptions of Theorem~1 are satisfied as soon as 
\begin{equation}\label{eq:10}
n^{\tau-1} m\to 0
\end{equation} for some $\tau>1/2$. Similarly, the assumption \eqref{eq:con2} of Lemma~2 is satisfied as soon as  
\begin{equation}\label{eq:11}
\log^2\!m\,\frac{m}{n}\to 0.
\end{equation}
In Figure~\ref{fig1} we illustrate the convergence results of Theorem~1$(iii)$ and Lemma~2$(i)$ for   this power law model and $\alpha=0.5$.  The panels of Figure~\ref{fig1} presents the  sample vs standard normal quantile  (QQ) plots  for the normalized Renyi entropy statistic and  the normalized Pearson statistic \eqref{eq:ch1} based on  $B=5000$ samples from the power law distribution, each  with $m=1000$ and three different values of $n=m^{1+\ve}$ ($\ve=-0.5, 0.5, 1.5$). As seen from the plots, in the absence of  \eqref{eq:10} the CLT result for the Renyi entropy (cf. Theorem~1$(iii)$) does not hold. Moreover, the middle panel QQ plot  indicates that  for  large $m,n$  satisfying $n=m^{3/2}$ the discrepancy between distribution of  the entropy function  and its plug-in estimate appears in a form of deterministic shift, indicating the presence of substantial asymptotic bias and hence the lack of  convergence  \eqref{eq:bias}. Similarly,  when  \eqref{eq:11} is not satisfied than   the Pearson statistic CLT given in Lemma~2$(i)$ fails with the middle panel again indicating that the bias of the estimate does not vanish when $m$ is too large relative to $n$. 
\begin{figure}[htp]
 \centering{
\includegraphics[scale=0.6,trim=0 10 0 10]{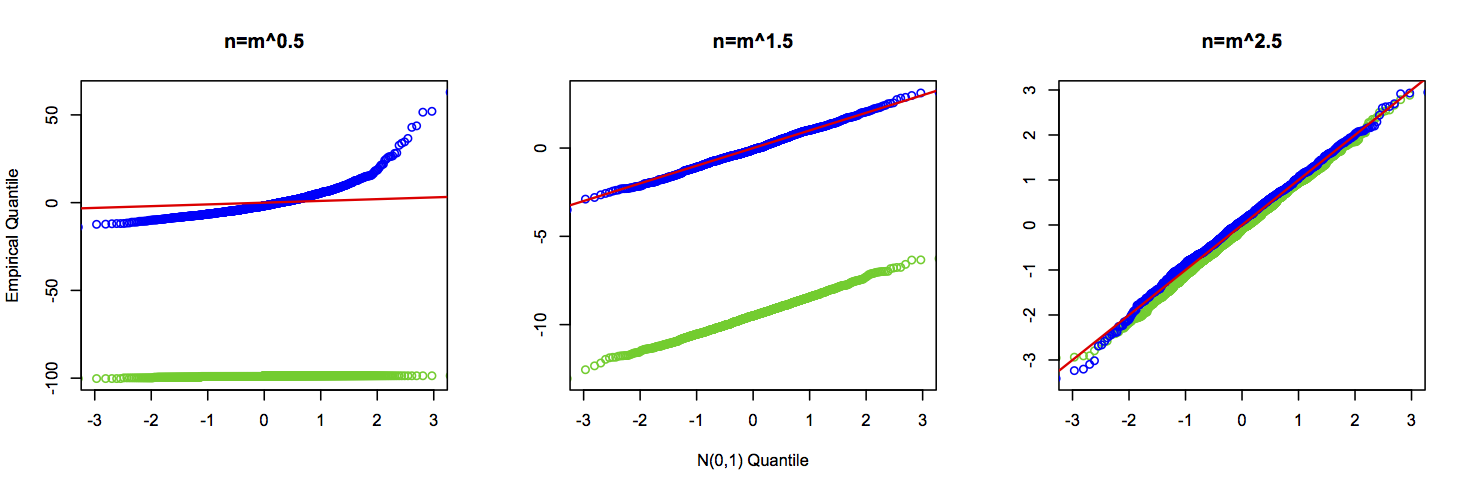}}
\caption{{\bf Projection CLTs.} Normal QQ  plots  for  the normalized Renyi entropy (Theorem~\ref{thm:1}$(iii)$, lower (green) curve) and normalized Pearson $\chi^2$ statistic (Lemma~2$(i)$,  upper (blue) curve)  for the power law distribution $p_i=1/i$. The panels shows quantile plots  with different values of $n=m^{1+\ve}$ ($\ve=-0.5, 0.5, 1.5$) and $m=1000$. The   solid  (red) line gives  quantiles of the standard normal distribution for reference. }\label{fig1}
\end{figure}

For comparison,  we also considered the uniform distribution (pure noise) model $p_i=1/m$. Note that it  may be  viewed as a degenerate power law where $\beta=0$ and $l(x)\equiv1$.  Recall that according to Theorem~3 $(ii)$ and Lemma~2 $(i)$, the sufficient conditions for the respective CLTs are  $m^2/n^{1-\tau}\to 0$ and $m/n\to 0$ (see Remark~\ref{rem3} for the latter one).  The necessity  of  these conditions is  illustrated in the panels of  Figure~\ref{fig2}  where we again present the (normal) QQ plots for the Renyi ($\alpha=0.5$) and the Pearson statistics   for the same values of $B, n$ and $m$ as in Figure~\ref{fig1}. As seen from these plots,  only in the last panel, when $m^2/n\approx  0$, we get good  CLT approximation for both statistics. These  results appear  consistent with our theoretical results from Theorem~3 and Lemma~2.
\begin{figure}[htp]
 \centering{
\includegraphics[ scale=0.6,trim=0 10 0 10]{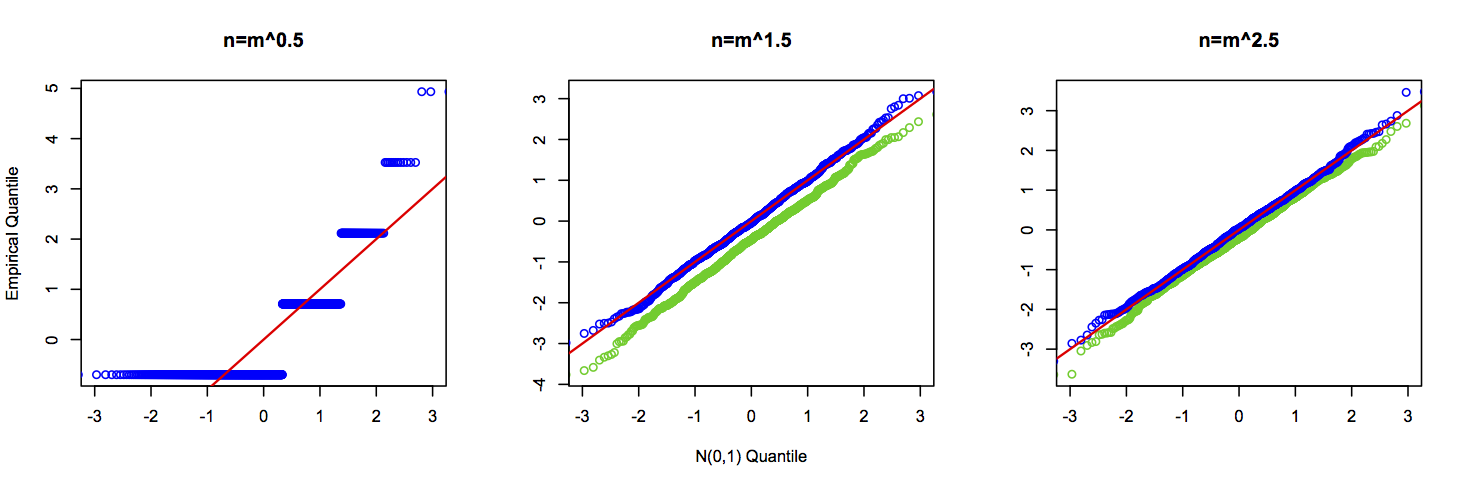}}
\caption{{\bf  Degenerate Projection CLTs.} Normal QQ plots for  the normalized uniform Renyi entropy (Theorem~\ref{u1}$(ii)$, represented by the lower  (green) curve)  and the normalized  Pearson $\chi^2$-statistic (Lemma~2 $(i)$, represented by the upper (blue) curve) with $p_i=m^{-1}$. The panels shows quantile plots with  different values of $n=m^{1+\ve}$ ($\ve=-0.5, 0.5, 1.5$) and $m=1000$. The   solid  (red) line gives  the quantiles of the standard normal distribution for reference.  Note that the normalized Renyi entropy is undefined for   the first panel. }\label{fig2}
\end{figure}

Although not presented here due to space considerations, similar examples based on the  bivariate power laws  may be used to  illustrate  the necessity of the assumptions of type  \eqref{eq:pc2} and \eqref{eq:cond3} in the CLT results for divergence in Theorems~2$(iii)$ and 4$(ii)$. 

\subsection{Applications to NGS Data}
 Our CLT results described in Section~\ref{sec:lims} were originally motivated by questions rising in NGS data analysis. Below  we describe   two  examples which adhere to the following basic framework.   Denote by  $\be_1,\be_2$  two independent noise distributions  each on $m$ support points,    and assume   that a pair $(\bp, \bq)$ of marginal distributions may be represented as \begin{equation}\label{eq:nm}(\bp,\bq) =\la (\btp, \btq)+(1-\la)(\be_1,\be_2)\end{equation} where $(\btp,\btq)$ is a pair of  marginal   distributions having no common support points with  $(\be_1,\be_2)$ and $\la$ is the mixing proportion (or prior probability of signal).   We  assume that  each $\be$ is a simple finite mixture of  $K$  uniform distributions on separate support. Note that  the noise-and-signal model from Example~\ref{ns1} in Section~\ref{sec:ld} may be viewed as a (univariate) special case of \eqref{eq:nm} with $K=1$.  In the  first  example below we took $K=2$.   
 \begin{table}
\begin{center}
\begin{tabular}{ccc}
& Antibiotic ($\btp$)& Control  ($\btq$)\\
\hline
\tabT$n$ & 39,084&39,084\\
$m$ & 165&165\\
$k_m$ & 17 & 17 \\
$\hat{\la}$ &0.46 &0.46\\
$\hat{\beta}$ & 0.869(0.05) & 0.971(0.05) \\
${\cal H}_{1/2}$ &4.81 (4.79, 4.82) & 4.64 (4.63, 4.67) \\
$ENC_{1/2}$ & 122.73 (120.30, 123.97) & 103.54 (102.51, 106.70)\\
${\cal D}_{1/2}$ & \multicolumn{2}{c}{0.155 (0.147, 0.163)}  \\
\hline

\end{tabular}
\caption{{\bf Results of  TCR data analysis}. The mixture model \eqref{eq:nm} with   heavy-tailed power laws  fitted to two sets of TCR  counts derived from mouse MLN before and after an antibiotic treatment as described in \cite{Cebula2013}. The empirical Renyi entropy, the Hill number and the Renyi diversity CIs (in parenthesis)  are obtained from the CLT results of Theorems~1 and 2.}\label{tab1}
\end{center}
\end{table}

\vspace{.1in}\noindent {\bf Algorithm 1(NGS  Diversity Analysis with $\D$ or $\S$)}\vspace{-.1in}
\begin{cenum}
\item {\em Exponent ($\a$) selection}. Use  problem-specific criteria (e.g. sample coverage, see \cite{RS12}) to identify the appropriate $\alpha$ value. If no  prior knowledge  exist,  the value  $\a=1/2$  (the  Bhattacharyya distance) may be  often used. 
\item {\em Noise filtering}. Identify the number of mixture components $K$ and the cut-off count(s) $k_m$ for  the support of  $\be_i$ in \eqref{eq:nm} with   a sequential (starting from the lowest empirical frequency) procedure  based on  Lemma~2$(i)$ with  $\bp=\be_i$ ($i=1,2$).    The values of $\la$ is then estimated as the proportion of a sample falling into the $m$ 'noise' categories.  
\item{\em Equality testing}. For a pre-determined value of $\a$,  test the hypothesis $H_0: \btp=\btq$ by  comparing the  observed value of $\D$  (alternatively, $\S$) with  the asymptotic normal distribution in  Theorem~4. 
\item{\em Difference quantification}. If $H_0$ is  not rejected,  conclude that $\D\equiv 0$ ($\S\equiv 1$). Otherwise,   apply Theorem~2 to obtain confidence bounds 
for $\D$ ($\S$).\end{cenum}

\subsubsection{T-Cell Receptor Populations} In this  example we apply Algorithm~1 to  measure similarity between   a pair of  T-cell receptor (TCR) populations based on the observed  NGS counts of  receptor-specific nucleotide sequences.   With the  current NGS  technology,  the two main difficulties in  comparing  TCR populations are  to adjust the under-sampling  bias due to unobserved rare types  and  the `ghost`  types  created   due to  the sequencing errors  \citep{Wang:2014aa}.    The first  problem may be often alleviated  by   applying  diversity criteria, like the Renyi entropy and divergence, which allow for the sample-based up-weighting of  rare  counts (see  \citealt{RS12}).   The second one  requires typically additional assumptions, in order to perform  analysis as outlined  in Algorithm~1$(ii)$.  A recent  detailed overview  of the   TCR diversity analysis  methods was presented  by   \cite{RS12} and  earlier on, in a more general context of biodiversity,  by  \cite{Hsieh:2006sy} and  \cite{Magurran:2005by}.   For illustration, we analyze here  two populations derived from the mesenteric lymph nodes (MLN) of a  TCR mini-mouse before and after an antibiotic treatment.  The details of the experiments and   a dataset description are  given in   \cite{Cebula2013}.  For the current analysis it is important to note that,  since the experimental groups  consisted of  different animals,  we may consider  two experimental groups  as independent.  The total combined  sample size (or sequencing depths)  was $n=72,030$,  with initial $m_0=6,336$ receptor types. After performing  step $(ii)$ of   Algorithm~1 $m=165$  types were  identified  as ``signal"  based on the  cut-off  $k_m=17$ in both populations. The signal population corresponded to the remaining sample size of $38,896$   or about  54\% of the original NGS counts. We used  $\D$ with  $\alpha=1/2$ as the diversity measure  in step $(iii)$-$(iv)$  of Algorithm~1.   Based on Theorem~2,  the asymptotic  $P$-value for testing $H_0: \btp=\btq$ was found to  be less than $10^{-4}$ and hence the hypothesis of equal diversity of the two populations was rejected (see Algorithm~1$(iii)$).  

\begin{figure}[htp!] \centering{ \
\includegraphics[scale=0.4,trim=15 15 15 15]{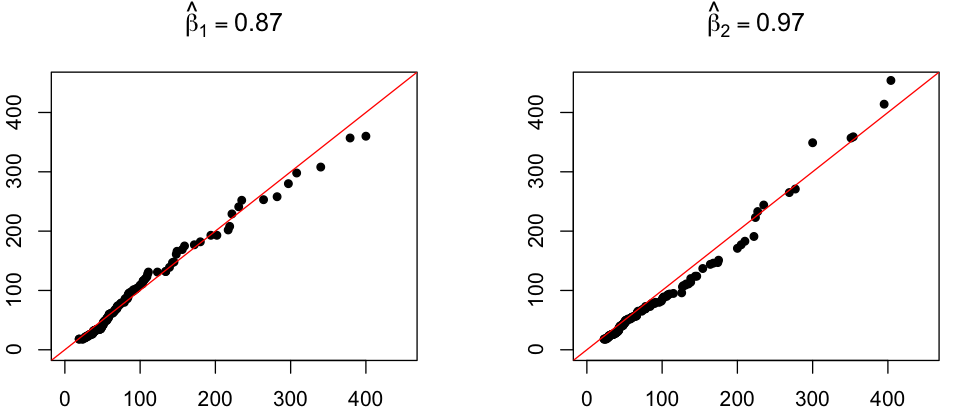}}
\caption{{\bf Power law fit for TCR data}. QQ plot of the TCR data against quantiles of a power law distribution with $\beta_1=0.87$ ($SE=.05$)  and $\beta_2=0.97$ ($SE=.05$) values fitted via  the least squares method. }\label{fig:qq} 
\end{figure}

To compare this finding with a more standard  parametric analysis, we additionally  fitted, with the least squares method, the counts of $165$ receptor types in two populations to the   power law distributions.  Since the respective  exponent  values  for the two  fitted populations were found  to be different, with    $\beta_1=.87$ (for antibiotic treated mice) and $\beta_2=.97$ (for untreated), the parametric analysis confirmed the findings of Algorithm~1.  For illustration,  the plots of the fitted power law quantiles versus  the empirical ones  are presented in Figure~\ref{fig:qq}.  Additionally, the diversity of each of the TCR populations in terms of its  respective Renyi entropy ${\cal H}_{1/2}$ and the Hill number $ENC_{1/2}$ as well as the diversity difference  measured by the Renyi divergence  ${\cal D}_{1/2}$  are listed  in Table~\ref{tab1}, along with the  corresponding asymptotic  confidence intervals  obtained via Theorems~1 and 2.  
 As seen from the values in Table~\ref{tab1},  although the diversity of each of the NGS populations was relatively similar  in terms of the two populations count patterns, it differed in terms of the specific TCR types expressed.

\subsubsection{Gene  Expression Profiling}

Beyond Algorithm~1, the results of Section~3   may be applied to facilitate various other biodiversity analysis,   for instance, in    simultaneous comparison of  several pairs of  molecular  samples.  We  illustrate this  with an NGS   data example from  the  recent hepatocellular carcinoma (HCC)  study in  \cite{HCC14} which we obtained  through the  gene expression omnibus (GEO)  database. The GEO dataset consists of HCC tumor-infected ($T$) and healthy liver  ($N$)  tissue samples from three individuals denoted below  as follows  in relation to their original database designations   $T1=HCC448T, T2=HCC473T, T3=HCC510T$  and  $N1=HCC448N, N2=HCC473N, N3=HCC510N$.
For this dataset one of the questions  of research interest was whether the expression profiles of  genes associated with regulation of cell proliferation and programmed cell death  differ  across $T$ and $N$ samples as well as across individuals  (cf., e.g., \citealt{Kong:2013aa}). To address this specific  question, in contrast with the  previous TCR example,  we were  thus only interested in  a pre-selected subset of  the NGS counts. The final values of  $m=1332 $ and $n$ between 1.2 and 1.9 million reads \footnote{Based on these  values, the empirical versions of the conditions for the relevant  theorems in Section~3 were  considered satisfied.}  were obtained after aligning the   pre-selected NGS fragments   to the  HG19  reference genome with the Tophat2/Bowtie2 software \citep{Kim:2013aa} and performing the  transcript annotation with the  Ensembl genome browser (\url{www.ensembl.org}).  After  the final fragments-to-counts  conversion, our  data analysis was performed in three steps. 
First, the null hypothesis of the tissue homogeneity $H_0^{all}= \{T_1=N_1=T_2=N_2=T_3=N_3\}$ was tested (and rejected) based on the result of  Theorem~4 and the  corresponding asymptotic $p$-value obtained from the $\chi^2(3)$ distribution. 
Next,   the  hypothesis of the across-individuals homogeneity was tested by evaluating three pairwise null hypothesis $H_0^{ij}=\{{\cal D}_{1/2}(T_i,N_i)={\cal D}_{1/2}(T_j,N_j)]\},$ $1\le i<j\le 3$  (each  rejected) based on Theorem~4.  Finally, having rejected the homogeneity hypothesis  we have used the result of Theorem~2 to quantify the differences between the three sets of  $T$ and $N$ tissue samples. The details of the analysis are presented in Table~\ref{tab3}. As seen from the numerical results,   it seems that despite the large  individual differences between patients, the set of $m=1332$ genes   associated with cell proliferation and  death may be used to distinguish between  T-type  and N-type samples in HCC patients.

\begin{table}
\begin{center}
\begin{tabular}{cccccc}
 Hypothesis  &Statistic   & $P$-value & ${\cal D}_{1/2}$ Value (CI)   \\
\hline
\tabT $H_0^{all}$&$\sum w_i[{\cal D}_{1/2}(i)-\mu_i]^2$  & $< 0.001$  & NA \\
\tabT  $H_0^{1,2}$  &${\cal D}_{1/2}(1)-{\cal D}_{1/2}(2)$& $< 0.01$ & ${\cal D}_{1/2}(1)$=0.553 (0.551, 0.555)  \\
\tabT  $H_0^{2,3}$   &${\cal D}_{1/2}(2)-{\cal D}_{1/2}(3)$ & &${\cal D}_{1/2}(2)$=0.292 (0.291, 0.294) \\
\tabT  $H_0^{3,1}$  & ${\cal D}_{1/2}(3)-{\cal D}_{1/2}(1)$ &  &${\cal D}_{1/2}(3)$= 0.346 (0.345 0.348)\\
\tabB\\
\hline
\end{tabular}
\caption{The $95\%$ confidence intervals for the pairwise symmetric Renyi Divergence ${\cal D}_{1/2}$ between the tumor and control (healthy) tissues from three individuals based on the profile of expression of pre-selected  $m=1332$ transcripts  related to cell proliferation. Here ${\cal D}_{1/2}(i)$ denotes  ${\cal D}_{1/2}(T_i,N_i).$}\label{tab3}
\end{center}
\end{table}

\section{Summary and Conclusions}

  We derived  two sets of  limit theorems for the Renyi entropy and divergence statistics. The first set of results  holds  for lineralizeable statistics (their first order Taylor approximations exist) whereas the second one holds in the degenerate case (when  the first order approximations  vanish)  and requires   analyzing the quadratic terms in the Taylor expansions.   Our  Renyi entropy limit theorems   complement  those obtained elsewhere for the Shannon entropy and divergence.  
  
 Based on the CLT results we have proposed here  a new framework for  analyzing molecular diversity of molecular (especially NGS) data  based on  the idea of  analyzing the   frequency/contingency tables  where cell counts are highly unbalanced (for instance, as arriving from  mixtures   of  heavy tailed, power-law type and uniform distributions) and the number of cells or, equivalently,  the counts distribution support size $m$, increases with the sample size $n$. For analyzing such tables,   we suggested using the  empirical Renyi entropy and divergence as  the statistical measures of, respectively,  diversity and pairwise similarity of different molecular sub-populations.

 In  the two  examples of NGS analysis we have  shown how  the Renyi entropy  methods may be used for filtering out low frequency noise and for establishing valid confidence bounds in  pairwise divergence analysis for pre-selected transcripts. However,  it was also seen that  in order to apply our CLT results  the number of  transcripts had to be small relative to the sequencing depth.  For the special class of  heavy-tailed  power law distributions, our  results in particular  indicate that the appropriate entropy CLTs  are valid  (and thus so is our  proposed analysis framework)  when, roughly speaking,  $m/\sqrt{n}\to 0$ and not otherwise.  As such  restriction may be often limiting in very high diversity NGS data,   other  statistics  beyond  those  discussed here  and  not requiring such  condition  could be also  of interest.  We hope to  pursuing  this matter  further in our future 
work.

%

 

 \bibliography{bmb2}

\newpage
\appendix
\renewcommand{\theequation}{\thesection. \arabic{equation}}
\numberwithin{equation}{section}
\setcounter{page}{1}
\vspace{40pt}\noindent {\bf\large Appendix}
\section{Proofs for Non-Degenerate Projections  (Section 3.1)}
\subsection*{Auxiliary Results}

First, we establish the following simple result on   binomial moments. 

\begin{lemma}[Binomial moment bound] \label{lem:a1} Let $[x]$ denote the largest integer smaller or equal to $x$ and let $\p_n$ be an empirical binomial proportion  from $n$ independent Bernoulli trials with the success probability $0<p_n<1$. Assume $np_n\to \infty$ as $n\to \infty$. Then for any integer $d\ge1$ and sufficiently large $n$
$$\vert E \left( \p_n n-p_nn\right)^d\vert\le C_d\, (np_n)^{[d/2]} $$ for some universal ($n$ free) constant $C_d$.
\end{lemma}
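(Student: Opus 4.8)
\emph{The plan} is to recognize that $E(\hat p_n n - p_n n)^d$ is precisely the $d$-th central moment $\mu_d$ of a $\mathrm{Bin}(n,p_n)$ law and to bound it by a direct combinatorial expansion. Write $\hat p_n n - p_n n=\sum_{k=1}^n Y_k$, where $Y_k=\eta_k-p_n$ and $\eta_1,\dots,\eta_n$ are i.i.d.\ Bernoulli$(p_n)$ variables, so that $EY_k=0$, $|Y_k|\le 1$, and $EY_k^2=p_n(1-p_n)\le p_n$. The whole point will be to extract, for each factor of $n$ produced by the expansion, a matching factor of $p_n$, so that the bound comes out in terms of $(np_n)^{[d/2]}$ rather than merely $n^{[d/2]}$.

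First I would expand by the multinomial theorem, $\mu_d=\sum \binom{d}{d_1,\dots,d_n}\prod_{k=1}^n EY_k^{d_k}$, the sum running over $d_1+\dots+d_n=d$, and then pass to absolute values using $|EY_k^{d_k}|\le E|Y_k|^{d_k}$. By independence and $EY_k=0$, every term containing an exponent $d_k=1$ vanishes; hence only those compositions survive whose nonzero parts are all at least $2$. If $r$ denotes the number of nonzero parts of such a composition, then $2r\le d$, i.e.\ $r\le[d/2]$, which is the source of the exponent in the claim.

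Next I would bound a surviving term. Since $|Y_k|\le 1$, for any exponent $d_k\ge 2$ we have $|Y_k|^{d_k}\le Y_k^2$, so $E|Y_k|^{d_k}\le EY_k^2\le p_n$; thus each of the $r$ nonzero factors contributes at most $p_n$ and their product is at most $p_n^r$. There are $\binom{n}{r}\le n^r$ ways to place the nonzero parts, and the combinatorial weight summed over compositions of $d$ into $r$ parts each $\ge 2$ is bounded by a constant $A_{d,r}\le r^d$ (by the multinomial theorem). Collecting the contributions over $r$ gives
$$|\mu_d|\le\sum_{r=1}^{[d/2]}A_{d,r}\binom{n}{r}p_n^r\le\sum_{r=1}^{[d/2]}A_{d,r}(np_n)^r.$$
Because $np_n\to\infty$, for all sufficiently large $n$ we have $np_n\ge 1$, whence $(np_n)^r\le(np_n)^{[d/2]}$ for each $r\le[d/2]$, and the claim follows with $C_d=\sum_{r=1}^{[d/2]}A_{d,r}$.

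The step that needs care is not the counting but the \emph{pairing} of powers: a crude bound $\binom{n}{r}\le n^r$ followed by $n^r\le n^{[d/2]}$ would be worthless when $p_n\to 0$, since $n^{[d/2]}$ can dwarf $(np_n)^{[d/2]}$. The remedy, and the one genuinely load-bearing estimate, is the variance bound $E|Y_k|^{d_k}\le p_n$ valid for $d_k\ge 2$, which supplies exactly one factor $p_n$ for each nonzero index and hence exactly matches the $n^r$ coming from the index choice; the hypothesis $np_n\to\infty$ is then used only at the very end to absorb the lower-order terms into the leading $(np_n)^{[d/2]}$.
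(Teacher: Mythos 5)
Your proof is correct, but it takes a genuinely different route from the paper's. You expand $E\bigl(\sum_k Y_k\bigr)^d$ directly by the multinomial theorem for centered Bernoulli summands, kill every term containing an exponent equal to $1$ via $EY_k=0$, observe that the surviving compositions have at most $[d/2]$ nonzero parts, and then pair each of the $\binom{n}{r}\le n^r$ index choices with a factor $p_n^r$ coming from the bound $E|Y_k|^{d_k}\le EY_k^2\le p_n$ (valid for $d_k\ge 2$ since $|Y_k|\le 1$); the hypothesis $np_n\to\infty$ enters only to absorb the terms with $r<[d/2]$ into the leading one. The paper instead starts from the closed-form expression for the raw binomial moments in terms of Stirling numbers of the second kind (citing Knoblauch), writes the central moment as a polynomial in $\mu=np_n$, derives a recursion $c_{d+1,k}=d\,c_{d-1,k-1}+k\,c_{d,k}$ for the coefficients from the Stirling and binomial recurrences, and proves by induction that $c_{d,k}=0$ whenever $k>d/2$, so that the degree of the polynomial is at most $[d/2]$. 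Your argument is more elementary and self-contained -- it needs no special-function identities and, as a bonus, applies verbatim to any sum of independent, centered, uniformly bounded variables with second moments bounded by $p_n$, not just to Bernoulli trials. The paper's argument, at the price of the Stirling-number machinery, yields a sharper structural fact (exact vanishing of all high-order coefficients, not merely an upper bound), though only the bound is ever used. One presentational nit: state explicitly that the terms with some $d_k=1$ are identically zero \emph{before} (or compatibly with) applying the triangle inequality -- as written the order of the two steps reads slightly inverted, though since each such term contains the factor $|EY_k|=0$ the conclusion is unaffected.
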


\begin{proof} Let $X$ be a binomial $Bin(n,p_n)$ random variable and set $\mu =n p_n$.  
Then (see e.g, \cite{Knoblauch08}) $$ E(X-\mu)^d = \sum_{i=0}^d \binom{d}{i} (-\mu)^{d-i} EX^i=\sum_{i=0}^d \binom{d}{i} (-\mu)^{d-i} \sum_{k=0}^i \stir{i}{k}p_n^k n^{\underline{k}}, $$ where $\stir{i}{k}$ denotes a  Stirling number  of the second kind (i.e. the number of ways to  partition  a set of $i$ objects into $k$ non-empty subsets) and $n^{\underline{k}}=n(n-1)\cdots(n-k+1)$.
Let $$c_{d,k}= \sum_{i=d-k}^d (-1)^{d-i} \binom{d}{i}\stir{i}{i-d+k}=\sum_{i=0}^k (-1)^i \binom{d}{i} \stir{d-i}{k-i}$$ denote the coefficient at $\mu^k$ in the expression for  $E(X-\mu)^d$.
Then for $1\le k\le d$\begin{equation}\label{eq:rec} c_{d+1,k}= d\,c_{d-1,k-1}+k\,c_{d,k}.\end{equation}
Indeed, 
\begin{align*}
&d\,c_{d-1,k-1}+k\,c_{d,k}\\
 &= d \sum_{i=0}^{k-1} (-1)^i \binom{d-1}{i} \stir{d-i-1}{k-i-1}+k \sum_{i=0}^k (-1)^i \binom{d}{i} \stir{d-i}{k-i}\\
&=  \sum_{i=0}^{k-1} (-1)^i \binom{d}{i+1} (i+1) \stir{d-i-1}{k-i-1}+k \sum_{i=1}^k (-1)^i \binom{d}{i} \stir{d-i}{k-i}+k\,\stir{d}{k}\\
&= \sum_{i=1}^{k-1} (-1)^i \binom{d}{i}  \left((k-i)\stir{d-i}{k-i}+\stir{d-i}{k-i-1}\right)+k\,\stir{d}{k}-\sum_{i=1}^{k-1} (-1)^i \binom{d}{i}  \stir{d-i}{k-i-1}\\ &\text{and,  using  the recursions for the Stirling numbers and the  binomial coefficients,}\\
&=\sum_{i=1}^{k-1} (-1)^i \left(\binom{d+1}{i}-\binom{d}{i-1}\right) \stir{d-i+1}{k-i}+\stir{d+1}{k}-\stir{d}{k-1}-\sum_{i=1}^{k-1} (-1)^i \binom{d}{i}  \stir{d-i}{k-i-1}\\
&=c_{d+1,k}+0.
\end{align*}
Let us argue that for any $d\ge 1$ we have \begin{equation}\label{eq:rec1}c_{d,k}=0\quad \text{for $k$ such that $d/2<k\le d$}.\end{equation}  The proof of \eqref{eq:rec1} is by induction with respect to $d\ge1$. Note that the statement   is true  for  $d=1$ due to $c_{d,d}=0$ for $d\ge 1$ (but  $c_{0,0}=1$). Now, if  $k> (d+1)/2$ then $k-1>(d-1)/2$ and $k>d/2$ and thus \eqref{eq:rec} implies $c_{d+1,k}=0$ for  $k>(d+1)/2$ since the induction assumption implies $c_{d-1,k-1}=c_{d,k}=0$.  Hence \eqref{eq:rec1} holds  and consequently  the   highest  power of  $\mu$  in the expansion of  $E(X-\mu)^d$  cannot exceed $d/2$. This yields the assertion of the lemma.

\end{proof}

\begin{lemma}\label{lem:lc}  Set $\W_n^{(\a)}=(W_n^{(\a)}-EW_n^{(\a)})/(Var W_n^{(\a)})^{1/2}$. Under  the  assumptions of Theorem~\ref{thm:1},  the Lindeberg condition \begin{equation}\label{eq:lc}\forall_{\varepsilon>0}\ E (\W_n^{(\a)})^2I(|\W_n^{(\a)}|>\varepsilon\sqrt{n})\to 0, \quad  n\to \infty 
\end{equation} is satisfied. Consequently, $$\sqrt{n}\sum_{i=1}^n \W_{ni}^{(\a)}\Rightarrow N(0,1) $$ with   $n$ iid random variables $\W_{ni}^{(\a)}$ equidistributed 
with  $\W_n^{(\a)}$. Moreover, the result  remains true if we replace above $W_n^{(\a)}$ by $V_n^{(\a)}$  under the assumptions  of Theorem~\ref{thm:2}.   \end{lemma}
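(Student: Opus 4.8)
The plan is to recognize that the asserted weak convergence is simply the Lindeberg--Feller central limit theorem for a triangular array whose $n$-th row consists of the $n$ independent, identically distributed, centered, unit-variance summands $\W_{ni}^{(\a)}$. Consequently the whole substance of the lemma reduces to verifying the Lindeberg condition \eqref{eq:lc}; once that is in hand, the convergence of the normalized row-sums to $N(0,1)$ is immediate from the standard triangular-array CLT \citep[see, e.g.,][chapter 1]{Shao04}. Since each $\W_{ni}^{(\a)}$ is a bounded (finitely supported) random variable, I would establish \eqref{eq:lc} in its strongest form, by proving the summands are \emph{uniformly negligible}, that is $\sup|\W_n^{(\a)}|=o(\sqrt{n})$. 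For then, for every fixed $\ve>0$, the event $\{|\W_n^{(\a)}|>\ve\sqrt n\}$ is empty for all large $n$, the indicator $I(|\W_n^{(\a)}|>\ve\sqrt n)$ vanishes identically, and \eqref{eq:lc} holds trivially.

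To produce the uniform bound I would argue as follows. The variable $W_n^{(\a)}$ takes the nonnegative values $\a p_i^{\a-1}$ with probabilities $p_i$, the largest value occurring at the smallest $p_i$ because $\a-1<0$. Using $EW_n^{(\a)}=\a\S(\bp)=\a\sum p_i^\a$, the elementary bound $|a-b|\le\max(a,b)$ for $a,b\ge 0$, and the two inequalities $\max_i p_i^{\a-1}\le\sum_i p_i^{\a-1}$ and $\S(\bp)\le\sum_i p_i^{\a-1}$ (both valid since $0<\a<1$ and every $p_i\le 1$), one gets $|W_n^{(\a)}-EW_n^{(\a)}|\le\a\sum p_i^{\a-1}$ surely. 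Dividing through by $(Var W_n^{(\a)})^{1/2}$ yields $\sup|\W_n^{(\a)}|\le\a\,\sum p_i^{\a-1}\,(Var W_n^{(\a)})^{-1/2}$, and hypothesis \eqref{eq:pc} is precisely the statement that this right-hand side is $o(\sqrt n)$. The assumption $\inf_n\C(W_n^{(\a)})>0$ keeps $Var W_n^{(\a)}$ bounded away from zero, so the normalization is legitimate throughout and the denominators never collapse.

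With the Lindeberg condition verified, the Lindeberg--Feller theorem gives the claimed convergence of the normalized row-sums to $N(0,1)$. The $V_n^{(\a)}$ case is then handled identically, line by line: $V_n^{(\a)}$ is again finitely supported with $EV_n^{(\a)}=\S(\bp,\bq)$, its support values are bounded by $\a\max_i(q_i/p_i)^{1-\a}+(1-\a)\max_j(p_j/q_j)^{\a}\le\sum(q_i/p_i)^{1-\a}+\sum(p_i/q_i)^{\a}$, and condition \eqref{eq:pc2} plays exactly the role that \eqref{eq:pc} played above, forcing the normalized variable $(V_n^{(\a)}-EV_n^{(\a)})/(Var V_n^{(\a)})^{1/2}$ to be uniformly $o(\sqrt n)$.

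I anticipate no genuine obstacle: the proof is essentially bookkeeping once one sees that \eqref{eq:pc} and \eqref{eq:pc2} are exactly the hypotheses that make the (bounded) summands uniformly $o(\sqrt n)$, collapsing the Lindeberg condition to a triviality rather than requiring a delicate tail estimate. The only points that deserve care are the two elementary inequalities $\S(\bp)\le\sum p_i^{\a-1}$ and $\max_i p_i^{\a-1}\le\sum_i p_i^{\a-1}$, each resting on $0<\a<1$ and $p_i\le 1$, and the observation that the coefficient-of-variation assumption prevents the normalizing variances from degenerating.
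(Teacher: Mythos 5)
Your proof is correct and follows essentially the same route as the paper: the paper likewise uses \eqref{eq:pc} to conclude that the standardized support values of $W_n^{(\a)}$ are uniformly $o(\sqrt{n})$, so that the exceedance event in the Lindeberg condition is eventually empty, and then invokes the Lindeberg CLT (with the $V_n^{(\a)}$ case handled analogously via \eqref{eq:pc2}). Your packaging via the sure bound $\sup|\W_n^{(\a)}|\le \a\sum p_i^{\a-1}\,(Var\, W_n^{(\a)})^{-1/2}$ is just a slightly more direct rendering of the paper's argument that its index set $J_n=\{i:\a p_i^{\a-1}>\tfrac{\ve}{2}\sqrt{n}\,(Var\, W_n^{(\a)})^{1/2}\}$ is empty for all large $n$.
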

\begin{proof} We shall only prove the statement for $W_n^{(\a)}$,  as the   proof  for $V_n^{(\a)}$ is similar.  
For notational convenience, set $\sigma^2_n=Var W_n^{(\a)}$, $\mu_n=E W_n^{(\a)}$ and 
$\W_n=\W_n^{(\a)}$. In view of  \eqref{eq:pc} we have  as $n\to \infty$
\begin{equation}\label{eq:maj}
\frac{\mu_n}{\sqrt{n}\sigma_n}=\frac{\a\sum p_i^\a}{\sqrt{n}\sigma_n}\le \frac{\sum p_i^{\a-1}}{\sqrt{n}\sigma_n}\to 0.
\end{equation}
Note that $\W_n=\a\sigma_n^{-1}\sum p_i^{\a-1}(\delta_i-p_i) $ where the vector 
$(\delta_1,\ldots,\delta_m)$ represents  a single trial  multinomial random vector with parameters $(p_1,\ldots,p_m)$. For any $\ve >0$ 
\begin{align}\label{eq:bnd}
E \W_n^2I(|\W_n|>\ve \sqrt{n}) &= \a^2\sigma_n^{-2} E(\sum p_i^{\a-1}(\delta_i-p_i))^2I(|\W_n|>\ve \sqrt{n}) \notag\\
&\le   \a^2\sigma_n^{-2} E[(\sum p_i^\a)^2+\sum \delta_i p_i^{2(\a-1)}]I(|\W_n|>\ve \sqrt{n}).
\end{align}
Since  by \eqref{eq:maj}  $\mu_n=o((n\sigma^2_n)^{1/2})$,  then by the definition of $\delta_i$,  for sufficiently large $n$ we have  
\begin{align*}
\{\omega: |\W_n|>\ve\sqrt{n}\} &=\{\omega: \a|\sigma_n^{-1}\sum p_i^{\a-1}(\delta_i-p_i)|>\ve\sqrt{n}\}
\\ &=\{\omega: \delta_i=1 \text{ for $i$ such that } \a\vert  p_i^{\a-1}-\mu_n \vert >\ve\sqrt{n}\sigma_n\}\\ 
&\subset\{\omega: \delta_i=1 \text{ for $i$ such, that } \a p_i^{\a-1}>\frac{\ve}{2}\sqrt{n}\sigma_n\}  \}\\ 
&=:\{\omega: \delta_i=1 \text{ for } i\in J_n\}
\end{align*}
where the last equality defines the set of indices $J_n$. Note that the size of the set $J_n$ satisfies $|J_n|\to 0$ as $n\to \infty$,  due to \ $\max_{1\le i\le m_n} p_i^{\a-1}/\sqrt{n}\sigma_n\to 0$ as $n\to \infty$, which is implied by \eqref{eq:pc}.
This  and  \eqref{eq:bnd} give therefore (at least for large $n$)
\begin{align*}
E \W_n^2I(|\W_n|>\ve \sqrt{n}) &\le  \sigma_n^{-2} \sum_{i\in J_n}p_i(\mu_n^2+ \a^2p_i^{2(\a-1)})= (\mu_n/\sigma_n)^2 \sum_{i\in J_n}p_i+ \a^2\sigma_n^{-2} \sum_{i\in J_n} p_i^{2\a-1}\\
& \le  2\a(\mu_n/\sigma_n)^2  \sum p_i^\a/(\ve\sqrt{n}\sigma_n) + (\mu_n^2/\sigma_n^2 +1)  \sum_{i\in J_n} p_i^{2\a-1}/(\mu_n^2+\sigma_n^2)\to 0
\end{align*}
as $n\to \infty$,  since $\sup_n(\mu_n/\sigma_n)^2<\infty$ by the assumptions of Theorem~1 and $\a^2\sum p_i^{2\a-1}=\mu_n^2+\sigma_n^2$. The weak convergence assertion follows now by the Lindeberg central limit theorem (see, e.g, \cite{Shao04} Chapter 1).

\end{proof}

\subsection*{Proof of Theorem~\ref{thm:1}} Let us first establish part $(ii)$. 
Note that  \eqref{eq:pc} implies that \begin{equation}\label{eq:an}a_n^2=n/Var W_n^{(\a)}\to \infty, \end{equation} in view of 
\begin{equation*} a_n^2\sum p_i^{\a-1}/n\ge \a^2a_n^2\sum p_i^{2\a-1}/n\ge 1\end{equation*} 
which yields  $ a_n\ge (n Var W_n^{(\a)})^{1/2}(\sum p_i^{\a-1})^{-1}.$
By  Taylor's expansion 
\begin{equation}\label{eq:taylor} \S(\bhp)-\S(\bp)=\sum \p_i^\a-\sum p_i^\a=\sum \a p_i^{\a-1}(\p_i-p_i)+ R_{n}
\end{equation}
where  $$R_{n}=\sum\binom{\a}{2} p_i^\a\left(\frac{\p_i-p_i}{p_i}\right)^2 \left(\theta_i\left(\frac{\p_i}{p_i}-1\right)+1\right)^{\a-2}\quad \text{for some random }\theta_i\in (0,1). $$ 
Fixing   $\delta\in(0,1/2)$, for  any  $\ve>0$, we have \begin{align*}P(|R_n|>\ve) &=P\left(|R_n|>\ve,\max_i \left|\theta_i\left(\frac{\p_i}{p_i}-1\right)\right|\le\delta\right)+P\left(|R_n|>\ve,\max_i \left|\theta_i\left(\frac{\p_i}{p_i}-1\right)\right|> \delta\right)\\ & \le P\left(|R_n|>\ve,\max_i \left|\theta_i\left(\frac{\p_i}{p_i}-1\right)\right|\le\delta\right)+P\left(\max_i \left|\theta_i\left(\frac{\p_i}{p_i}-1\right)\right|> \delta\right)\\ &=:(I)+(II).\end{align*}
First, note $$ (I)\le \frac{1}{\ve}E|R_n|I\!\left(\max_i \left|\theta_i\left(\frac{\p_i}{p_i}-1\right)\right|\le\delta\right)\le \frac{1}{\ve}\binom{\a}{2}\sum p_i^\a E\left(\frac{\p_i-p_i}{p_i}\right)^2 \left(1-\delta\right)^{\a-2}\le\frac{C_1}{\varepsilon\,n} \sum p_i^{\a-1}. $$   Now, recall the condition \eqref{eq:ld} and consider $d\ge 1$ large enough so that $d\tau>1$ and hence $(np_\ast)^{-d}\le n^{-1}$ for sufficiently large $n$. Applying Bool's (subadditivity) inequality bound and Lemma~\ref{lem:a1} to  the $2d$-th central moments of the $\hat{p}_i$'s, we get 
$$ (II)\le P\left(\max_i \left|p_i^{(\a-1)/2d}\left(\frac{\p_i}{p_i}-1\right)\right|>\delta\right) \le \sum\frac{C_2\, p_i^{\a-1}}{\delta^{2d}\,n^d p_i^d} \le \frac{C_3}{\delta^{2d}\,n} \sum p_i^{\a-1}  $$  where 
$C_1,C_2, C_3$ are constants independent of $n$ and $p_i$ (with $C_2$ being $C_{2d}$ of  Lemma~\ref{lem:a1}). 
Therefore,  for any $\varepsilon>0$ and the numerical sequence $a_n=|a_n|=(n/Var W_n^{(\a)})^{1/2}\to \infty$ (cf. \eqref{eq:an}) as well as   a possibly different set of constants $C_1,C_2,C_3$
\begin{equation}\label{eq:rem}P(|a_n R_n|> \ve)\le \max\left(\frac{C_1 a_n}{\ve},\frac{C_2}{\delta^{2d}} 
\right) \sum p_i^{\a-1}/n\le C_3 a_n \sum p_i^{\a-1}/n\to 0\end{equation} as $n\to \infty$, due to 
\eqref{eq:pc}. Note that the random variable $\a\sum  p_i^{\a-1}(\p_i-p_i)$ has the same distribution 
as ${\cal W}_n=\sum (W_{ni}^{(\a)}-EW_{ni}^{(\a)})/n$, with   iid random variables $W_{ni}^{(\a)}$distributed 
as   \eqref{eq:w} and that, due to    \eqref{eq:rem},  $a_n R_n=R_n/(Var {\cal W}_n)^{1/2}=o_p(1)$. Since    Lemma~\ref{lem:lc} ensures  that $({\cal W}_n-E{\cal W}_n)/(Var {\cal W}_n)^{1/2}\Rightarrow N(0,1)$, the 
result follows.

To argue part $(i)$,  note that from the definition of \eqref{eq:w} we have  $EW_n^{(\a)}=\a^2\S(\bp)=\a^2\sum p_i^\a\ge \a^2$ and   by \eqref{eq:an} 
$\tilde{a}_n=(n\S(\bp)^2/Var W_n^{(\a)})^{1/2}\to \infty$. Consequently,  part $(i)$ follows immediately from part $(ii)$. 


Finally, we show part $(iii)$.
 Consider  arbitrary $\delta\in(0,1)$ and note that on the events $|\S(\bhp)/\S(\bp)-1|\le \delta$, by   
  Taylor's expansion  for  $|x|<1$, we have  \begin{equation}\label{eq:log}\log(1+x)=x-\frac{x^2}{2(1+\theta x)^2},\end{equation} where $\theta\in (0,1)$  and hence 
\begin{align}
(1-\a)(\H(\bhp)-\H(\bp))=\log\left( \frac{\S(\bhp)}{\S(\bp)}\right) &=\frac{\S(\bhp)}{\S(\bp)}-1-\frac{\left(\frac{\S(\bhp)}{\S(\bp)}-1\right)^2}{2(1+\theta(\frac{\S(\bhp)}{\S(\bp)}-1))^2}\notag\\ &=\frac{\S(\bhp)-\S(\bp)}{\S(\bp)}+T_n.\label{fromRtoT}
\end{align}  where the last equation defines $T_n$.
  By applying again the expansion argument used in the  proof of part $(ii)$  with  $\S(\bhp)/\S(\bp)$ in place of $\S(\bhp$) and $R_n/\S(\bp)$ in place of  $R_n$ and   the   sequence   $\tilde{a}_n=(n\S(\bp)^2/Var W_n^{(\a)})^{1/2}\to \infty$,  we see that 
  \begin{equation}\label{eq:1}
\tilde{a}_n (\S(\bhp)/\S(\bp)-1)= \sqrt{n}\,\frac{\S(\bhp)-\S(\bp)}{(Var W_n^{(\a)})^{1/2}}\Rightarrow N(0,1)
  \end{equation} as $n\to \infty.$
  Note that  for any $\ve >0$ 
  $$P(\tilde{a}_n|T_n|>\ve, |\S(\bhp)/\S(\bp)-1|\le \delta)\le P\left(\tilde{a}_n \frac{|\S(\bhp)/\S(\bp)-1|}{2(1-\delta)^2}>\ve/\delta \right)$$ and therefore 
  \begin{align}\label{eq:1b}  P(\tilde{a}_n|T_n|>\ve)\le P(|\S(\bhp)/\S(\bp)-1|> \delta) +P\left(\tilde{a}_n |\S(\bhp)/\S(\bp)-1|> 2\ve(1-\delta)^2/\delta \right).
   \end{align}  In view of $(i)$ and \eqref{eq:1}, denoting the absolute value of $N(0,1)$ by $|N|$,   \eqref{eq:1b} yields 
   $$ \limsup_n P(\tilde{a}_n|T_n|>\ve) \le 0+P(|N|\ge 2\ve(1-\delta)^2/\delta ).$$
    By taking $\delta>0$ to be sufficiently small we get $$  \limsup_n P(\tilde{a}_n|T_n|>\ve)\le \gamma $$  for  arbitrary $\gamma>0$,   and therefore  $ \limsup_n P(\tilde{a}_n|T_n|>\ve)=0$. Thus
for  any $x$, we have
  $$P\left( \sqrt{n}\,\frac{(1-\a)(\H(\bhp)-\H(\bp))}{\a\,{\cal CV} (W_n^{(\a)})} \le x\right) =P\left(\sqrt{n}\,\frac{\S(\bhp)-\S(\bp)}{(Var W_n^{(\a)})^{1/2}} \le x\right)+o_p(1)$$ and the result follows from part $(ii)$.
\qed

\subsection*{Proof of Theorem~\ref{thm:2}}

The proof follows  closely that of Theorem~\ref{thm:1} with some obvious modifications. 
For illustration, we shall only argue part $(ii)$.  Let us first note that,  in parallel with \eqref{eq:an}, $b_n =(n/Var V_n^{(\a)})^{1/2}\to \infty.$ Indeed, since
\[ b_n^2\ge \frac{n}{E (V_n^{(\a)})^2}\ge \frac{n}{2\sum_{ij} p_{ij}\left[\left(\frac{q_i}{p_i}\right)^{2(1-\a)}+\left(\frac{p_j}{q_j}\right)^{2\a}\right]}\ge \frac{n}{2\sum \left(\frac{q_i}{p_i}\right)^{1-\a} +2\sum \left(\frac{p_i}{q_i}\right)^{\a} },\] therefore
\begin{equation}\label{eq:bn} 2b_n\ge  \frac{(nVar V_n^{(\a)} )^{1/2}} 
{\sum \left(\frac{q_i}{p_i}\right)^{1-\a} +\sum \left(\frac{p_i}{q_i}\right)^{\a} } \to \infty\end{equation} due to \eqref{eq:pc2}.
Next, we show that the limiting distribution  is determined by the projection $V_n^{(\a)}$. By the bivariate Taylor expansion, one obtains
\[
b_n(\S(\bhp,\bhq)-\S(\bp,\bq)) = b_n \left( \a \sum_i (q_i/p_i)^{1-\a} (\p_i-p_i) + (1-\a)\sum_i (p_i/q_i)^{\a} (\q_i-q_i)\right )+ b_nR_n\] where 
\[ R_n=\left.\sum_i \sum_{\{(k,l): k,l\ge 0, k+l=2\}} \frac{\partial^2 \tilde{p}_i^{\a}\tilde{q}_i^{1-\a}}{\partial^k \tilde{p}_i\partial^l \tilde{q}_i}\right|_{(\tilde{p}_i,\tilde{q}_i)=(p_i,q_i)+\th_i(\p_i-p_i,\q_i-q_i)} \frac{(\p_i-p_i)^k(\q_i-q_i)^l}{k!\, l!}
\] and   $ |\th_i|\le 1$ for all $i$. Since  for the mixed derivatives term,  by virtue of  the  elementary inequality $2ab\le a^2+b^2$ (with 
$a=\alpha(\hat{p}_i-p_i)/\tilde{p}_i$  and $b=(1-\alpha)(\hat{q}_i-q_i)/\tilde{q}_i$) we have $$2\a(1-\a)\sum \tilde{p}_i^{\a-1}\tilde{q}_i^{-\a}(\p_i-p_i)(\q_i-q_i)\le \a^2\sum \tilde{p}_i^{\a-2}\tilde{q}_i^{1-\a}(\p_i-p_i)^2+(1-\a)^2\sum \tilde{p}_i^{\a}\tilde{q}_i^{-\a-1}(\q_i-q_i)^2, $$ 
therefore
$$b_nR_n\le  2b_n (R_n^{(1)}+R_n^{(2)}) $$ with 
\[R_n^{(1)}=\sum_i p_i^{\a} q_i^{1-\a} \left(\th_i\left(\frac{\p_i}{p_i}-1\right)+1\right)^{\a-2} \left(\th_i\left(\frac{\q_i}{q_i}-1\right)+1\right)^{1-\a} \left(\frac{\p_i-p_i}{p_i}\right)^2 \]
and 
\[R_n^{(2)}=\sum_ip_i^{\a} q_i^{1-\a} \left(\th_i\left(\frac{\p_i}{p_i}-1\right)+1\right)^{\a} \left(\th_i\left(\frac{\q_i}{q_i}-1\right)+1\right)^{-\a-1} \left(\frac{\q_i-q_i}{q_i}\right)^2 .\]  Clearly, it suffices now to show only that 
$b_n R_n^{(i)}=o_p(1)$ for $i=1,2$. 
We only prove the second relation,  the other  one follows similarly.
Analogously as in   the proof of Theorem~\ref{thm:1} taking some small $\delta>0$ we have 
\begin{align*}P(b_nR_n^{(2)}>\ve) &\le P\left(b_nR_n^{(2)}>\ve,\max_i \left\{\left| \th_i\left(\frac{\p_i}{p_i}-1\right)\right| +\left| \th_i\left(\frac{\q_i}{q_i}-1\right)\right|\right\}\le\delta\right)\\
&+P\left(\max_i\left\{\left| \th_i\left(\frac{\p_i}{p_i}-1\right)\right|+\left|\th_i \left(\frac{\q_i}{q_i}-1\right)\right|\right\}> \delta\right)\\ &=:(I)+(II).\end{align*}
Apropos $(I)$,   for some universal ($n$-free and $\delta$-free) constant $C$ we have 
$$ (I) \le b_n\ve^{-1}  \sum_ip_i^{\a} q_i^{1-\a} \left(1+\delta\right)^{\a} \left(1-\delta \right)^{-\a-1} E\left(\frac{\q_i-q_i}{q_i}\right)^2 
\leq C\varepsilon^{-1} \sum p_i^{\a}q_i^{-\a}/(n Var  V_n^{(\a)})^{1/2}\to 0 $$
 by \eqref{eq:pc2}.  Apropos $(II)$, we have
$$ (II)\le P\left( \max_i \left| \left(\frac{\p_i}{p_i}-1\right) \right| >\delta/2\right)+P\left(\max_i \left|\left(\frac{\q_i}{q_i}-1\right)\right|>\delta/2\right)=:(IIa)+(IIb)$$
Note that for $d$ large enough so that $d\tau> 1$, in view of \eqref{eq:ld} and the Boole inequality bound combined with  the result of Lemma~\ref{lem:a1},
\begin{align*} (IIa) &\leq
 P\left( \max_i \left|\max\left\{ \left(\frac{p_i}{q_i}\right)^{\frac{\a}{2d}},\left(\frac{q_i}{p_i}\right)^{\frac{1-\a}{2d}}\right\} \left(\frac{\p_i}{p_i}-1\right)\right|>\delta/2\right)\leq  \left(\frac{2}{\delta}\right)^{2d}\sum \max\left\{\left(\frac{p_i}{q_i}\right)^\a, \left(\frac{q_i}{p_i}\right)^{1-\a} \right\}(np_i)^{-d}\\
 & \le \left(\frac{2}{\delta}\right)^{2d}n^{-1} \sum \left[\left(\frac{p_i}{q_i}\right)^\a +\left(\frac{q_i}{p_i}\right)^{1-\a}\right]\le  \left(\frac{2}{\delta}\right)^{2d}b_nn^{-1}\sum \left[\left(\frac{p_i}{q_i}\right)^\a +\left(\frac{q_i}{p_i}\right)^{1-\a}\right]\rightarrow 0\\
& \text{by \eqref{eq:bn} and \eqref{eq:pc2}. Similarly, } \\
 (IIb) &\le P\left( \max_i \left|\max\left\{\left(\frac{p_i}{q_i}\right)^{\frac{\a}{2d}},\left(\frac{q_i}{p_i}\right)^{\frac{1-\a}{2d}}\right\} \left(\frac{\q_i}{q_i}-1\right)\right|>\delta/2\right)\le  \left(\frac{2}{\delta}\right)^{2d}b_nn^{-1}\sum \left[\left(\frac{p_i}{q_i}\right)^\a +\left(\frac{q_i}{p_i}\right)^{1-\a}\right]\rightarrow 0\end{align*} and therefore $b_nR_n^{(2)}=o_p(1)$ and by a similar argument  $b_nR_n^{(1)}=o_p(1)$. Consequently, $b_nR_n=o_p(1).$
 Finally, since the distribution of
$\a \sum_i (q_i/p_i)^{1-\a} (\p_i-p_i) + (1-\a)\sum_i (p_i/q_i)^\a (\q_i-q_i)$ is equal to  that of  $\sum_{k=1}^n (V_{k}^{(n)}-EV_{k}^{(n)})/n$ where $V_{k}^{(n)}$ are independent and distributed according to $V_n^{(\a)}$ given in \eqref{eq:v}, the result follows by  Lemma~\ref{lem:lc}.
\section{Proofs for   Degenerate Projections (Section~3.2)}

\subsection*{Auxiliary Results} The following lemma is cited after  \citet[][Theorem~ 4.7.3, page 162]{BK}.
 \begin{lemma}[Degenerate $U$-statistic CLT]\label{lem:bk} Let $X_1\ldots,X_n$ be a sequence of iid random elements and let $$U_n(X_1,\ldots,X_n)=\binom{n}{2}^{-1}\sum_{1\le k<l\le n} h_n(X_k,X_l)$$ be a $U$-statistic of order two with a symmetric, real-valued kernel  $h_n(x,y)$  which depends on $n$ and satisfies $Eh_n(X,y)=0$. Denote also 
$\Psi_n(y,z)=E(h_n(X,y)h_n(X,z)).$
Assume that  $Eh_n^4<\infty$ and set $\sigma_n^2=Eh_n^2$. If the conditions  
\begin{align}\label{eq:1a}
&n^{-1}\sigma_n^{-4}Eh_n^4\to 0\\
& \sigma_n^{-4} E \Psi_n^2\to 0
\label{eq:2a}
\end{align}
are satisfied, then 
$$ n U_n/(\sqrt{2}\sigma_n)\Rightarrow N(0,1).$$ \qed
\end{lemma}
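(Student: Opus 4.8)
The plan is to prove this degenerate $U$-statistic CLT via a standard martingale central limit theorem for triangular arrays, applied to the Hoeffding decomposition of $W_n:=\binom{n}{2}U_n=\sum_{1\le k<l\le n}h_n(X_k,X_l)$. I would first record the two structural consequences of the canonical assumption $Eh_n(X,y)=0$. Taking the filtration $\mathcal F_l=\sigma(X_1,\dots,X_l)$ and writing $W_n=\sum_{l=2}^n D_{n,l}$ with $D_{n,l}=\sum_{k=1}^{l-1}h_n(X_k,X_l)$, degeneracy gives $E[D_{n,l}\mid\mathcal F_{l-1}]=\sum_{k=1}^{l-1}g_n(X_k)=0$, where $g_n(x)=E[h_n(x,X)]$ is the vanishing one-variable projection; hence $\{D_{n,l}\}_{l=2}^n$ is a martingale-difference array. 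The same cancellation makes the summands of $W_n$ pairwise uncorrelated, so $s_n^2:=\mathrm{Var}(W_n)=\binom{n}{2}\sigma_n^2$. Since $nU_n/(\sqrt2\,\sigma_n)=(n/(n-1))^{1/2}\,W_n/s_n$, it suffices to prove $W_n/s_n\Rightarrow N(0,1)$ and finish by Slutsky.

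I would then verify the conditional-variance hypothesis. Conditioning each square on $\mathcal F_{l-1}$ gives
\begin{equation*}
E[D_{n,l}^2\mid\mathcal F_{l-1}]=\sum_{k=1}^{l-1}\phi_n(X_k)+\sum_{1\le k\ne k'\le l-1}\Psi_n(X_k,X_{k'}),
\end{equation*}
where $\phi_n(x)=E[h_n(x,X)^2]$ satisfies $E\phi_n=\sigma_n^2$. Summing over $l$, the diagonal part equals $\sum_{k=1}^{n-1}(n-k)\phi_n(X_k)$, with mean exactly $s_n^2$; using $E\phi_n^2\le Eh_n^4$ (Cauchy--Schwarz) its variance is $O(n^3\,Eh_n^4)$, so its relative variance is $O(Eh_n^4/(n\sigma_n^4))\to0$ by \eqref{eq:1a}, and this part is $s_n^2(1+o_P(1))$. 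Collecting weights turns the off-diagonal part into $\sum_{1\le k<k'\le n}2(n-k')\Psi_n(X_k,X_{k'})$; the key point is that $\Psi_n$ is itself canonical, since $E_X[\Psi_n(X,z)]=E[g_n(X)h_n(X,z)]=0$ as $g_n\equiv0$. Consequently all cross-products drop out and its second moment equals $4\,E\Psi_n^2\sum_{k<k'}(n-k')^2=O(n^4\,E\Psi_n^2)$, giving relative size $O(E\Psi_n^2/\sigma_n^4)\to0$ by \eqref{eq:2a}. Hence $s_n^{-2}\sum_{l=2}^n E[D_{n,l}^2\mid\mathcal F_{l-1}]\to1$ in probability.

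It remains to check the conditional Lindeberg condition, for which I would use a fourth-moment (Lyapunov) bound. Conditioning on $X_l$, degeneracy makes $h_n(X_1,X_l),\dots,h_n(X_{l-1},X_l)$ conditionally i.i.d. and centered, so the elementary identity $E[(\sum_{k=1}^m Y_k)^4]=m\,EY^4+3m(m-1)(EY^2)^2$ yields $E D_{n,l}^4\le C\,l^2\,Eh_n^4$ (again via $E\phi_n^2\le Eh_n^4$). Summing gives $\sum_{l=2}^n E D_{n,l}^4\le C\,n^3\,Eh_n^4$, so $s_n^{-4}\sum_{l=2}^n E D_{n,l}^4\le C\,Eh_n^4/(n\sigma_n^4)\to0$ by \eqref{eq:1a}; this Lyapunov bound dominates the conditional Lindeberg sum in probability. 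The two hypotheses of the martingale CLT then deliver $W_n/s_n\Rightarrow N(0,1)$, and the normalization above closes the argument.

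The step I expect to be the main obstacle is the off-diagonal term of the conditional variance: one must recognize that $\Psi_n$ is a second canonical kernel, so that its weighted sum contributes only at order $E\Psi_n^2$ rather than $\sigma_n^4$. This is exactly where \eqref{eq:2a} is indispensable, and isolating it cleanly is the delicate part; the diagonal and the Lyapunov estimates are by contrast routine and rest only on \eqref{eq:1a} together with the Cauchy--Schwarz bound $E\phi_n^2\le Eh_n^4$.
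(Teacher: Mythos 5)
Your proof is correct, but note that the paper itself offers no proof of this lemma at all: it is stated with a terminal \qed and explicitly ``cited after'' Koroljuk and Borovskich (1994, Theorem~4.7.3), so the authors treat it as an imported black box. What you have written is essentially a self-contained reconstruction of the classical argument behind that cited theorem (going back to Hall's 1984 CLT for degenerate $U$-statistics): decompose $W_n=\binom{n}{2}U_n$ into the martingale differences $D_{n,l}=\sum_{k<l}h_n(X_k,X_l)$, use degeneracy to kill the one-variable projections, and verify the two hypotheses of the martingale CLT --- conditional variance convergence and a Lyapunov-type bound implying conditional Lindeberg. Your key steps all check out: the splitting of $\sum_l E[D_{n,l}^2\mid\mathcal F_{l-1}]$ into the diagonal part $\sum_k(n-k)\phi_n(X_k)$ (mean exactly $s_n^2$, relative variance controlled by \eqref{eq:1a} via $E\phi_n^2\le Eh_n^4$) and the off-diagonal part $\sum_{k<k'}2(n-k')\Psi_n(X_k,X_{k'})$; the observation that $\Psi_n$ is itself canonical, so its weighted sum has second moment $O(n^4 E\Psi_n^2)=o(s_n^4)$ by \eqref{eq:2a}; and the bound $ED_{n,l}^4\le Cl^2Eh_n^4$ from the conditional i.i.d.\ centered structure, which makes the Lyapunov sum $O(Eh_n^4/(n\sigma_n^4))\to0$ by \eqref{eq:1a}. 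The normalization $nU_n/(\sqrt2\sigma_n)=(n/(n-1))^{1/2}W_n/s_n$ and Slutsky close the argument. Two small points worth making explicit if you write this up: the martingale CLT you invoke (Brown/Hall--Heyde) requires either a constant limit for the conditional variance or nested filtrations across $n$ --- both hold here, since the limit is $1$ and $\mathcal F_l=\sigma(X_1,\dots,X_l)$ does not depend on $n$; and the conditional Lindeberg sum is dominated in $L^1$ (hence in probability) by the unconditional Lyapunov quantity, which is the precise sense of your ``dominates'' claim. So your proposal supplies exactly the proof the paper outsources, and adds value by making the role of each of the two conditions \eqref{eq:1a} and \eqref{eq:2a} transparent.
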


\subsection*{Proof of Lemma~\ref{lem:ch1}}
We start by  showing $(i)$. To this end, let $X_k$  for $k=1,\ldots,n$ be iid   single trial  multinomial variables  with parameter $\bp$ and denote 
$I(X_k=i)=\delta_i(X_k)$. 
Note  the identity
\begin{align*} \chp(\bhp,\bp)-m+1 &=n\sum p_i^{-1}(n^{-1}\sum_k \delta_i(X_k)-p_i)^2- m+1 \\ &= n^{-1}\sum p_i^{-1}\sum_{k\ne l} \delta_i(X_k)\delta_i(X_l) +n^{-1}\sum p_i^{-1}\sum_k \delta_i(X_k)-n-m+1\\
&=n^{-1}\sum_{k\ne l} (\sum p_i^{-1} \delta_i(X_k)\delta_i(X_l) -1)+n^{-1}\sum_k (\sum p_i^{-1} \delta_i(X_k)-m)\\
&=(n-1)U_n^{(1)}+R_n^{(1)}.
\end{align*}
Here $U_n^{(1)}$ is the  $U$-statistic with  order-two kernel $h_n(X_1,X_2)=\sum_i p_i^{-1} I(X_1=X_2=i)-1$ which is degenerate, i.e., satisfies $Eh_n(X_1,x_2)=0$. The remaining term  $R_n^{(1)}=n^{-1}\sum_k (V_k-EV_k)$,  where $V_k$ are iid, equidistributed with, say, $V$  such that  $P(V=p_i^{-1})=p_i$. We will argue that  \begin{equation}\label{eq:u1} (n-1)U_n^{(1)}/\sqrt{2(m-1)}\Rightarrow N(0,1)\end{equation} and  $R_n^{(1)}/\sqrt{m}=o_p(1).$
The second relation follows easily, since $m^{-1} Var R_n^{(1)}=(nm)^{-1} Var V=(nm)^{-1} (\sum p_i^{-1} -m^2)\to 0$ by assumption \eqref{eq:con2}. 

The convergence \eqref{eq:u1} will follow from Lemma~\ref{lem:bk} and Slutsky's theorem upon checking the conditions \eqref{eq:1a} and \eqref{eq:2a}. To this end   note that in the notation of Lemma~\ref{lem:bk} $\sigma_n^2=Var\, h_n=m-1$ since $Var [h_n(X_1,X_2)]=E[h_n(X_1,X_2)]^2=\sum p_i E(p_i^{-1} I(X_1=i)-1)^2=\sum p_iE(p_i^{-2}I(X_1=i)+1-2p_i^{-1}I(X_1=i))=m-1$. Similarly, 
$Eh_n^4=\sum p_i E(p_i^{-1}I(X_1=i)-1 )^4=\sum (p_i^{-2}-4p_i^{-1})+6m-3.$  Therefore 
$$ n^{-1}Eh_n^4/\sigma_n^4\le \frac{\sum (p_i^{-2}-4p_i^{-1})+6m-3}{n(m-1)^2}\le Cn^{-1}\sum p_i^{-2}m^{-2}\to 0 $$  due to \eqref{eq:con2} and  thus  \eqref{eq:1a} follows.  In order to  verify \eqref{eq:2a}, consider  first 
$\Psi_n(x,y)=E[h_n(X_1,x)h_n(X_1,y)]=p_x^{-1}I(x=y)-1.$ Since $E\Psi_n^2(X_1,X_2)=
\sum p_i E\Psi_n^2(X_1,i)=\sum p_i E[p_i^{-1} I(X_1=i)-1]^2=m-1$, then we have $E\Psi_n^2(X_1,X_2)/\sigma_n^4\to~0$ and \eqref{eq:2a} follows as well. Hence \eqref{eq:u1} follows and yields the assertion $(i)$. 

Now consider part $(ii)$. In parallel to part $(i)$,  define (cf. Section~\ref{sec:2})
 $Z_k=(X_k,Y_k)$ for    $k=1,\ldots,n$ as  a sequence of independent bivariate random variables distributed according to $Z=(X,Y)$. Additionally, for $i=1,\ldots,m$, as before let $\delta_i(X_k)=I(X_k=i)$, as well as  $\delta_i(Y_k)=I(Y_k=i)$. Set also $\Delta_i(Z_k)=\Delta_i(X_k,Y_k)=\delta_i(X_k)-\delta_i(Y_k)$. Note that for given $i$ the $\Delta_i(Z_k)$'s  for $k=1,\ldots,n$ are independent  variables  distributed  according to 
 $$\Delta_i(Z) =\begin{cases}
      0& \text{with prob.}\ 1-2(p_i-p_{ii}), \\
      1 & \text{with prob.}\ p_i-p_{ii} \\
       -1 & \text{with prob.}\ p_i-p_{ii}. \\
\end{cases} $$ In particular,  $E\Delta_i(Z)=0$ and $E\Delta^2_{i}(Z)=2(p_i-p_{ii})$.  Recall that $\mu_n=\sum (1-p_{ii}/p_i)$ and consider 
\begin{align*}
&\ch_{2\bp}(\bhp,\bhq)-\mu_n=n^{-1}\sum (2p_i)^{-1}(\sum_k \Delta_i(Z_k))^2-\mu_n \\ 
&=n^{-1}\sum_{k\ne l} \sum (2p_i)^{-1}\Delta_i(Z_k)\Delta_i(Z_l)+n^{-1}\sum_k \sum (2p_i)^{-1} (\Delta_i^2(Z_k)-2(p_i-p_{ii}))\\ &=(n-1) U_n^{(2)}+R_n^{(2)}.
\end{align*} which parallels the representation in part $(i)$.
Regarding $R_n^{(2)}$ note that it is, as before,  the zero mean sum of independent variables with variance 
\begin{align*}  Var\, R_n^{(2)} &=Var [n^{-1}\sum_k \sum_i (2p_i)^{-1}  \Delta_i^2(Z_k)]=n^{-1} Var[\sum_i (2p_i)^{-1}  (\delta_i(X)-\delta_i(Y))^2]\\
&\le  n^{-1} \sum_{i\ne j} \left((2p_i)^{-1}+(2p_j)^{-1}\right)^2 p_{ij}\le 2n^{-1} \sum_{i\ne j} \left((2p_i)^{-2}+(2p_j)^{-2}\right) p_{ij}\\ &\le n^{-1} \sum p_i^{-1}, \end{align*}
where the first inequality  above is obtained by applying  the second moment bound and  noticing  that the inner sum  consists of either two or zero summands, according to  $X\ne Y$ or $X=Y$.  The condition \eqref{eq:con2} implies that 
\begin{equation}\label{eq:3}
R_n^{(2)}/\sqrt{m}=o_p(1).
\end{equation} Regarding $U_n^{(2)}$, note that it is a $U$-statistic in bivariate variables $Z_k$ with second order degenerate kernel  given by $h_n(Z_1,Z_2)=\sum (2p_i)^{-1} \Delta_i(Z_1)\Delta_i(Z_2)$. Hence, in the notation of Lemma~5, $\sigma_n^2=E h_n^2(Z_1,Z_2)= E\left[\sum (2p_i)^{-1} \Delta_i(Z_1)\Delta_i(Z_2) \right]^2=\sum (2p_i)^{-2} (E\Delta_i^2(Z_1))^2+ \sum_{i\ne j} (2p_i)^{-1}(2p_j)^{-1}(E[\Delta_i(Z_1)\Delta_j(Z_1)])^2.$ Since we have $E\Delta_i^2(Z_1)=2(p_i-p_{ii})$ and  $E[\Delta_i(Z_1)\Delta_j(Z_1)]=-E[I(X_1=i,Y_1=j)+I(X_1=j,Y_1=i)]=-(p_{ij}+p_{ji})$, it follows that  $\sigma_n^2=\gamma_n^2$ given in \eqref{eq:sn}. Recall  (Remark~\ref{rem4})  that  under our assumptions $\sigma_n/\sqrt{m}\to 1$ and thus \eqref{eq:3} implies  $R_n^{(1)}/\sigma_n=o_p(1)$. 

Now, in order to complete the proof as in part $(i)$, we  only need to show that the conditions \eqref{eq:1a} and \eqref{eq:2a} are satisfied for $U_n^{(2)}$, since   then by Lemma~\ref{lem:bk}  the statement similar to  \eqref{eq:u1} holds for part $(ii)$, namely  \begin{equation}\label{eq:u2} (n-1)U_n^{(2)}/(\sqrt{2}\sigma_n)\Rightarrow N(0,1).\end{equation} 
To this end note first that by the definition of $\Delta$ variables $\Delta_i(Z)=\Delta_i^{2p+1}(Z)$ and $\Delta_i^2(Z)=\Delta_i^{2p}(Z)$ for any integer $p\ge 1$. Additionally, since $Z$ is bivariate,  for any   distinct  indices $(i,j,k,l)$    we have  $E[\Delta_i(Z)\Delta_j(Z)\Delta^2_k(Z)]=E[\Delta_i(Z)\Delta_j(Z)\Delta_k(Z)\Delta_l(Z)]=0$. Consequently,
\begin{align*} E h_n^4 &= E[\sum (2p_i)^{-1} \Delta_i(Z_1)\Delta_i(Z_2)]^4=2^{-4} \{\sum p_i^{-4} E[\Delta_i^4(Z_1)\Delta_i^4(Z_2)] \\
&\qquad +6\sum_{i\ne j} p_i^{-2}p_j^{-2} E[\Delta_i^2(Z_1)\Delta_i^2(Z_2)\Delta_j^2(Z_1)\Delta_j^2(Z_2)] \\
&\qquad +4\sum_{i\ne j} p_i^{-3}p_j^{-1} E[\Delta_i^3(Z_1)\Delta_i^3(Z_2)\Delta_j(Z_1)\Delta_j(Z_2) ]+0\}\\
&= 2^{-4} \{\sum p_i^{-4} 4(p_i-p_{ii})^2 +6\sum_{i\ne j} p_i^{-2}p_j^{-2} (p_{ij}+p_{ji})^2+4\sum_{i\ne j} p_i^{-3}p_j^{-1} (p_{ij}+p_{ji})^2\}\\
&= 2^{-4} \{\sum p_i^{-4} 4(p_i-p_{ii})^2 +\sum_{i\ne j} (6 p_i^{-2}p_j^{-2}+ 4 p_i^{-3}p_j^{-1})(p_{ij}+p_{ji})^2\}.
\end{align*}
In view of the condition  \eqref{eq:bd} and Remark~\ref{rem4} as well \eqref{eq:con2}
$$\frac{E h_n^4}{ n\sigma_n^4} \le \frac{3}{2\,n (m-2B)^2} \sum (p_i^{-2} +m B^2+B^2p_i^{-1})\le \frac{C}{n m^2} \sum p_i^{-2}\to 0$$ for some universal $C>0$ and hence \eqref{eq:1a} holds. 
In order to argue  \eqref{eq:2a}, set $z_1=(x_1,y_1)$  and $z_2=(x_2,y_2)$. Then \begin{align*}\Psi_n(z_1,z_2) &=E [h_n(Z,z_1) h_n(Z,z_2)]\\ &=E\{[(2p_{x_1})^{-1}(\delta_{x_1}(X)-\delta_{x_1}(Y))-(2p_{y_1})^{-1}(\delta_{y_1}(X)-\delta_{y_1}(Y))]\\& [(2p_{x_2})^{-1}(\delta_{x_2}(X)-\delta_{x_2}(Y))-(2p_{y_2})^{-1}(\delta_{y_2}(X)-\delta_{y_2}(Y))]\}.\end{align*} It follows that 
\begin{align*}\Psi_n(z_1,z_2) &=(4p_{x_1}p_{x_2})^{-1}(2p_{x_1}I(x_1=x_2)-p_{x_1,x_2}-p_{x_2,x_1})\\&\qquad+(4p_{y_1}p_{y_2})^{-1}(2p_{y_1}I(y_1=y_2)-p_{y_1,y_2}-p_{y_2,y_1})\\&\qquad-(4p_{x_1}p_{y_2})^{-1}(2p_{x_1}I(x_1=y_2)-p_{x_1,y_2}-p_{y_2,x_1})\\ &\qquad-(4p_{y_1}p_{x_2})^{-1}(2p_{y_1}I(y_1=x_2)-p_{y_1,x_2}-p_{x_2,y_1}) \\ &=:R(x_1,x_2)+R(y_1,y_2)-R(x_1,y_2)-R(y_1,x_2)
\end{align*} where the last equality is the definition.
Now consider 
\begin{align*}
E \Psi^2_n(Z_1,Z_2) &= ER^2(X_1,X_2)+ER^2(Y_1,Y_2)+ER^2(X_1,Y_2)+ER^2(Y_1,X_2) \\&\qquad +2E[R(X_1,X_2)R(Y_1,Y_2)]-2E[R(X_1,X_2)R(X_1,Y_2)] \\&\qquad-2E[R(X_1,X_2)R(Y_1,X_2)]
-2E[R(Y_1,Y_2)R(X_1,Y_2)] \\ &\qquad-2E[R(Y_1,Y_2)R(Y_1,X_2)]+2E[R(X_1,Y_2)R(Y_1,X_2)]\\&\le 4(ER^2(X_1,X_2)+ER^2(Y_1,Y_2) +ER^2(X_1,Y_2)+ER^2(Y_1,X_2)). 
\end{align*} where the last inequality follows by applying the inequality $2|ab|\le a^2+b^2$ to the integrants  in the  cross-product terms.  To show that the quadratic terms above are of order  $O(m)$ recall the assumption \eqref{eq:bd} and note that  we have 
\begin{align*}  ER^2(X_1,X_2) &=\sum_{x_1,x_2} \frac{p_{x_1}p_{x_2}}{16\,p_{x_1}^2p_{x_2}^2}(2p_{x_1}I(x_1=x_2)-p_{x_1,x_2}-p_{x_2,x_1})^2\\ &\le \sum_{x_1,x_2} \frac{p_{x_1}p_{x_2}}{4\,p_{x_1}^2p_{x_2}^2}\,p^2_{x_1}I(x_1=x_2)+\sum_{x_1,x_2} \frac{p_{x_1}p_{x_2}}{16\,p_{x_1}^2p_{x_2}^2}4B^2(p_{x_1}p_{x_2})^2\le \frac{m+B^2}{4} 
\end{align*} and via a similar argument it is easy to see that this  bound applies also  to the remaining  quadratic terms. Thus recalling Remark~\ref{rem4}
$$ E\Psi_n^2(Z_1,Z_2)/\sigma_n^4\le 4 (m+B^2)/(m-2B)^2\to 0.$$
 Therefore both  \eqref{eq:1a} and  \eqref{eq:2a} are satisfied and consequently \eqref{eq:u2} holds.    In view of   \eqref{eq:3}, 
  the proof of part $(ii)$ is completed. 
\qed
\subsection*{Proof of Theorem~\ref{u1}}

Consider first part  $(i)$. 
By  Taylor's expansion (note that the first term vanishes) 
\begin{equation}\label{eq:tay2} \S(\bhu)-\S(\bu)=\sum \u_i^\a-m^{1-\a}=0+m^{1-\a}n^{-1} \binom{\a}{2} \ch_\bu+ m^{1-\a}\binom{\a}{3} R_{n}
\end{equation}
where  $$R_{n}= \sum m(\u_i-m^{-1})^2 \left(m \u_i-1\right)\left(\theta_i\left(m\u_i-1\right)+1\right)^{\a-3}\quad \text{for some random }\theta_i\in (0,1). $$ 
Since    by Lemma~\ref{lem:ch1} and Remark~\ref{rem3} the properly normalized variable $ \ch_\bu$ is asymptotically normal under our assumptions, it only suffices to  show   that $\R_n=nR_n/\sqrt{m}=o_p(1)$.

Fixing   $\delta\in(0,1/2)$, for  any  $\ve>0$, we have \begin{align*}P(|\R_n|>\ve) &=P\left(|\R_n|>\ve,\max_i \left|\theta_i\left(m{\u_i}-1\right)\right|>\delta\right)+P\left(|\R_n|>\ve,\max_i \left|\theta_i\left(m{\u_i}-1\right)\right|\le\delta\right)\\ & \le P\left(\max_i \left|\theta_i\left(m\u_i-1\right)\right|>\delta\right)+P\left(|\R_n|>\ve,\max_i \left|\theta_i\left(m\u_i-1\right)\right|\le\delta\right)\\ &=:(I)+(II).\end{align*}
Note that   Lemma~\ref{lem:a1} and the Boole inequality  imply for $d$ large enough so as $d\tau>1$
$$ (I)\le P\left(\max_i \left|\left(m\u_i-1\right)\right|>\delta\right) \le C_{2d}\sum\,(m/\delta^2 n)^d \le C_{2d} \delta^{-2d} mn^{-\tau d}\to 0  $$  
by assumption.
Regarding $(II)$, note that $\delta<1/2$ and on the events $\{ \omega: \max_i \left|\left(m\u_i-1\right)\right|\le\delta \}$  we have the bound   
$|\R_n| \le  \delta \frac{Cn}{\sqrt{m}}\sum m(\u_i-m^{-1})^2,$ for some universal (i.e., $n$  and $\delta$ free) constant $C$,  so that 
\begin{align*} (II) &\le P\left( C\left|\frac{ n}{\sqrt{m}}\sum (m(\u_i-m^{-1})^2 -1/n)\right | >\ve/2\delta  \right) +P\left( C\max_i\left|m \u_i-1\right| \sqrt{m}>\ve/2 \right) \\ 
&= P\left(\frac{ Cn}{\sqrt{m}}\left| \ch_\bu -m/n\right|>\ve/2\delta \right) +m (2 C/\ve)^{2d} (m^2/n)^d. \end{align*} 
Consequently, from the above considerations and Lemma~\ref{lem:ch1} (denoting as before a  standard normal variable by $N$) it follows that 
$$\limsup_n P(|\R_n| >\ve)\le \limsup_n\ (I) +\limsup_n\ (II)= P(|N|>\ve(2C\delta)^{-1})\le \gamma$$
for any small $\gamma>0$ with  $\delta$ sufficiently small. Therefore  $\limsup_n P(|\R_n| >\ve)=0$, which  completes the proof of $(i)$. 
Consider now the assertion $(ii)$. The argument here is very similar to that of part $(iii)$ in Theorem~\ref{thm:1} and  we only sketch  it out, for the sake of brevity. Denote $c_n =1+\binom{\a}{2}\frac{m}{n}$.  It follows from $(i)$ that    \begin{equation*}\label{eq:wl}c_n^{-1}\S(\bhu)/\S(\bu)-1=\sum c_n^{-1}(m\u_i)^\a/m-1=o_p(1).\end{equation*} Hence, by virtue of  the Taylor expansion \eqref{eq:log} 
\begin{align*}
&\frac{n}{\a \sqrt{(m/2)}}\left[\H(\bhu) -\log m- (1-\a)^{-1}\log c_n \right]\nn\\
 &=\frac{\sqrt{2}n}{\a(1-\a)c_n\sqrt{m}}\left[ \frac{\sum (m\u_i)^\a}{m}-c_n\right]+T_n\end{align*} where $T_n$ stands now for the scaled quadratic term in the log expansion \eqref{eq:log}. Note that 
  the assertion $(ii)$ follows as soon as we show that $T_n=o_p(1)$. Similarly as in the proof of $(i)$ above,  it follows that on the events 
$\{|c_n^{-1}\S(\bhu)/\S(\bu)-1|\le \delta\}$ for $0<\delta<1/2$,  we have $|T_n| \le \frac{Cn}{\sqrt{m}}\left[ c_n^{-1}\sum (m\u_i)^\a/m-1\right]^2$ for sufficiently large $n$ and a universal (free of $n$ and $\delta$,  as above) constant $C$. Therefore for any $\ve>0$ 
\begin{align*}P(|T_n|>\ve) & =  P(|T_n|>\ve,|c_n^{-1}\S(\bhu)/\S(\bu)-1|\le \delta)+ P(|T_n|>\ve,|c_n^{-1}\S(\bhu)/\S(\bu)-1|>\delta) \\
&\le  P(|T_n|>\ve,|c_n^{-1}\S(\bhu)/\S(\bu)-1|\le \delta)+ P(|c_n^{-1}\S(\bhu)/\S(\bu)-1|>\delta)\\
&\le P\left(\frac{Cn}{\sqrt{m}}\left| c_n^{-1}\sum (m\u_i)^\a/m-1\right| >\ve/\delta\right)+o(1)=o(1)
\end{align*}
 using part $(i)$ and the fact that $c_n\to 1$ and $\delta>0$ may be arbitrarily small. The result follows. \qed
 
 \subsection*{Proof of Theorem~\ref{u2}} 
We shall only prove part $(i)$ since part $(ii)$ then follows  similarly as in  Theorems~\ref{thm:1} and \ref{thm:2}  and part $(ii)$ of  Theorem~\ref{u1}.
 Note 
 \[
\S(\bhp,\bhq)-1 =  \frac{\a(\a-1)}{n} \ch_{2\bp}(\bhp,\bhq) + R_n\] where 
\begin{align*} R_n &=\left. \sum_{\{(k,l): k,l\ge 0, k+l=3\}} \sum_i\ \frac{\partial^2 \tilde{p}_i^{\a}\tilde{q}_i^{1-\a}}{\partial^k \tilde{p}_i\partial^l \tilde{q}_i}\right|_{(\tilde{p}_i,\tilde{q}_i)=(p_i,q_i)+\th_i(\p_i-p_i,\q_i-q_i)} \frac{(\p_i-p_i)^k(\q_i-q_i)^l}{k!\, l!}\\
&=:  \sum_{\{(k,l): k,l\ge 0, k+l=3\}} \binom{\a}{k}\binom{1-\a}{l}R_n(k,l)
\end{align*} and  for all $i$ $ |\th_i|\le 1$. Due to \eqref{eq:bd} and Remark~\ref{rem4} as well as Lemma~\ref{lem:ch1} $(ii)$  it suffices to show that $\tilde{R}_n(k,l)=nR_n(k,l)/\sqrt{m}=o_p(1)$ for $k\ge 0$, $l\ge 0$ such that  $k+l=3$. Due to the invariance of $R_n$ under swapping  $\tilde{p}_i^\a$ and   $\tilde{q}_i^{1-\a}$, it suffices to show the above only for the pairs 
$(k=3,l=0)$ and $(k=2,l=1).$
To this end,  note 
\begin{align*} \tilde{R}_n(3,0) &=(n/\sqrt{m})\sum \tilde{p}_i^{\a-3}\tilde{q}_i^{1-\a}(\p_i-p_i)^3\\ &=\frac{n}{\sqrt{m}}\sum \left( 1+\theta_i \frac{\q_i-q_i}{q_i}\right)^{1-\a}\left( 1+\theta_i \frac{\p_i-p_i}{p_i}\right)^{\a-3} p_i^{-1}(\p_i-p_i)^2\left( \frac{\p_i}{p_i}-1\right)\end{align*} and $|\theta_i|\le 1$ for all $i$. For $0<\delta<1/2$ and $\ve>0$, let $A_n(\delta)= \{\omega: \max |\p_i/p_i-1|>\delta$ or $\max |\q_i/q_i-1|>\delta\}$
\begin{align*}
P(\vert\tilde{R}_n(3,0)\vert>\ve) &=P(\{\vert\tilde{R}_n(3,0)\vert>\ve\} \cap A_n(\delta))+P(\{\vert\tilde{R}_n(3,0)\vert>\ve\} \cap A_n^c(\delta))\\  &\le P(A_n(\delta))+ P(\{\vert\tilde{R}_n(3,0)\vert>\ve \}\cap A_n^c(\delta)).
\end{align*}Note that with large $n$ $P(A_n(\delta))\le \gamma $ for  arbitrarily small $\gamma>0$  due to 
$$P(A_n(\delta))\le P(\max |\p_i/p_i-1|>\delta) + P(\max |\q_i/q_i-1|>\delta)\le \gamma/2+\gamma/2$$ which follows by Lemma~\ref{lem:a1} and the application of Boole's bound, as before.  Recall from   Lemma~\ref{lem:ch1} that  $\mu_n=\sum (1-p_{ii}/p_i)$. Then  
\begin{align*} P(\{\vert\tilde{R}_n(3,0)\vert>\ve\} \cap A_n^c(\delta)) &\le P\left(\frac{n}{\sqrt{m}}\frac{(1+\delta)^{1-\a}}{(1-\delta)^{3-\a}}
 \max |\p_i/p_i-1| \sum (\p_i-p_i)^2/p_i  >\ve \right)\\ &\le  P\left(\frac{2n}{\sqrt{m}}
 \left|\sum (\p_i-p_i)^2/p_i -\mu_n/n \right| >\ve/2 \delta  \right) \\ &+P\left(\frac{2\mu_n}{\sqrt{m}} 
 \max |\p_i/p_i-1|  >\ve/2 \right)=:(Ia)+(Ib).\end{align*}    Note  that \eqref{eq:cond3} implies in particular \eqref{eq:con2} and therefore due to the CLT  result in $(i)$ of Lemma~\ref{lem:ch1}  we have $(Ia)\le \gamma/2$ for arbitrarily small $\gamma>0$ with $n$ large enough, whereas for $(Ib)$    
$$ (Ib)\le  P\left(\sqrt{m} 
 \max |\p_i/p_i-1|  >\ve \right)\le C m \left(\frac{m}{n \min p_i}\right)^d\le C m n^{-\tau d}\le \gamma/2 $$ for sufficiently large $d$, due to \eqref{eq:cond3},  the Boole inequality and   Lemma~\ref{lem:a1} (cf. previous proof). Consequently,  for arbitrarily small $\gamma$ and large $n$ 
$$ P(\vert\tilde{R}_n(3,0)\vert>\ve)\le 2\gamma.$$
For the  term  $\tilde{R}_n(2,1)$ note 
\begin{align*} \tilde{R}_n(2,1) &=(n/\sqrt{m})\sum (\tilde{p}_i^{\a-2}/\tilde{q}_i^{\a})(\p_i-p_i)^2(\q_i-q_i)\\ &=\frac{n}{\sqrt{m}}\sum \left( 1+\theta_i \frac{\q_i-q_i}{q_i}\right)^{-\a}\left( 1+\theta_i \frac{\p_i-p_i}{p_i}\right)^{\a-2} p_i^{-1}(\p_i-p_i)^2\left( \frac{\q_i}{q_i}-1\right).\end{align*}  The argument as above then applies also to bounding from above the probability $P(\vert\tilde{R}_n(2,1)\vert>\ve)$ with the obvious modification that 
\begin{align*} P(\{\vert\tilde{R}_n(2,1)\vert>\ve\} \cap A_n^c(\delta)) &\le P\left(\frac{n}{\sqrt{m}} (1-\delta)^{-2}
 \max |\q_i/q_i-1| \sum (\p_i-p_i)^2/p_i  >\ve \right)\\ &\le  P\left(\frac{2n}{\sqrt{m}}
 \left|\sum (\p_i-p_i)^2/p_i -\mu_n/n \right| >\ve/2 \delta  \right) \\ &+P\left(\frac{2\mu_n}{\sqrt{m}} 
 \max |\q_i/q_i-1|  >\ve/2 \right)=:(IIa)+(IIb)\end{align*}
for sufficiently small $\delta>0$.  One may then show that $(IIa)\le \gamma/2$ and $(IIb)\le\gamma/2$ and thus  $P(\vert\tilde{R}_n(2,1)\vert>\ve)  \le 2\gamma$ and  part $(ii)$ follows. \qed

\end{document}